\newtheorem{thrm}{Theorem}
\newtheorem{lmm}{Lemma}
\newtheorem{coro}{Corollary}
\theoremstyle{definition}
\newtheorem{defn}{Definition}
\newtheorem{example}{Example}
\newcommand{\AR}[0]{\mathit{AR}}
\newcommand{\ATT}[0]{\mathit{ATT}}
\newcommand{\Var}[0]{\mathit{Var}}
\newcommand{\Lin}[0]{\mathrm{in}}
\newcommand{\Lout}[0]{\mathrm{out}}
\newcommand{\Lundec}[0]{\mathrm{undec}}
\newcommand{\evalV}[2]{[ \! [ #1 ] \! ]_{#2}}
\newcommand{\LV}[1]{\mathrm{LV}(#1)}
\newcommand{\RE}[1]{\mathrm{RE}(#1)}
\newcommand{\REF}[1]{\mathrm{ref}(#1)}
\newcommand{\VArg}[0]{\mathrm{VA}}
\begin{document}
\begin{frontmatter}
\title{Compiling Arguments in an Argumentation Framework into Three-valued Logical Expressions}
\runningtitle{Compiling Arguments into Three-valued Logical Expressions}

\author[A]{\fnms{Sosuke} \snm{Moriguchi}%
\thanks{Corresponding Author: Contract Assistant, Kwansei Gakuin University, 2-1 Gakuen, Sanda, Hyogo, 669-1337 Japan;
E-mail:chiguri@acm.org.}}
and
\author[A]{\fnms{Kazuko} \snm{Takahashi}}

\runningauthor{S. Moriguchi and K. Takahashi}
\address[A]{Kwansei Gakuin University, Japan}
\begin{abstract}
In this paper, we propose a new method for computing general allocators directly from completeness conditions.
A general allocator is an abstraction of all complete labelings for an argumentation framework.
Any complete labeling is obtained from a general allocator by assigning logical constants to variables.
We proved the existence of the general allocators in our previous work.
However, the construction requires us to enumerate all complete labelings for the framework, which makes the computation prohibitively slow.
The method proposed in this paper enables us to compute general allocators without enumerating complete labelings.
It also provides the solutions of local allocation that yield semantics for subsets of the framework.
We demonstrate two applications of general allocators, stability, and a new concept for frameworks, termed arity.
Moreover, the method, including local allocation, is applicable to broad extensions of frameworks, such as argumentation frameworks with set-attacks, bipolar argumentation frameworks, and abstract dialectical frameworks.
\end{abstract}

\begin{keyword}
Argumentation framework\sep labeling\sep three-valued logic\sep abstract dialectical framework
\end{keyword}

\end{frontmatter}

\section{Introduction}

Dung's abstract argumentation framework (AF)~\cite{Dung95} focuses on the relations between attacking/attacked arguments.
Its semantics, consistent with the interpretation of the framework, are denoted as extensions (sets of the arguments) and labelings (labeling functions from the arguments to labels `in,' `out,' and `undec').
Such semantics shows the dependency between directly and indirectly related arguments.
Table~\ref{tbl:example1} shows all of the complete labelings (one of the types of semantics) for the framework $(\{ 1, 2, 3, 4 \}, \{ (1, 2), (2, 1), (1, 3), (2, 3), (3, 4) \})$.
The completeness of the labelings justifies the rows, but says nothing about the columns.
Furthermore, it is difficult to clarify the relationships between such arguments because this behavior is not demonstrated explicitly.

\begin{table}[tbp]
\caption{The list of complete labelings for $(\{ 1, 2, 3, 4 \}, \{ (1, 2), (2, 1), (1, 3), (2, 3), (3, 4) \})$.} \label{tbl:example1}
\begin{tabular}{c|cccc}
                & \multicolumn{4}{c}{Arguments} \\
Labeling        & 1     & 2     & 3     & 4 \\ \hline
$\mathcal{L}_1$ & undec & undec & undec & undec \\
$\mathcal{L}_2$ & in    & out   & out   & in \\
$\mathcal{L}_3$ & out   & in    & out   & in \\
\end{tabular}
\end{table}

We proposed a novel allocation method in \cite{Moriguchi18}.
The allocation method reframes the notion of the acceptability of the arguments in the framework and the possible behavior of each argument in the framework.
In the allocation method, we use an allocator, a function from arguments to three-valued logical expressions.
Allocated expressions show how the acceptance of arguments changes in the framework (as the columns in Table~\ref{tbl:example1}).
A general allocator is an allocator that abstracts all possible labeling functions for the framework.
It enables us to compare two arguments based on the expressions assigned to the arguments.
Constructing general allocators from the framework can be seen as compiling knowledge in the framework into expressions.
However, there are problems with the construction of general allocators presented in \cite{Moriguchi18}, in terms of both time and memory complexity.
The construction requires all complete labelings.
We fuse two complete labelings together and build up a general allocator.
The number of complete labelings is often greater than the size of the framework.

In this paper, we provide another algorithm for constructing general allocators.
The proposed algorithm solves the equations for the completeness conditions directly.
Hence, according to this algorithm, it is not necessary to compute the complete labelings.

The contributions of this paper are as follows.
\begin{itemize}
 \item Proposal of a new algorithm to build a general allocator:
 The algorithm is simple enough to implement easily.
 We implement a prototype solver based on the algorithm.
 
 \item Application of the algorithm to local allocation: The algorithm is easily extended to local allocation.
 Local allocation provides semantics for part of the framework by abstracting effects from outside of that part.
 Previously, we only discussed the completeness of local allocation.
 In this paper, we define general local allocators and present a construction algorithm.

 \item Applications of the general allocators:
 We demonstrate two applications of general allocators.
 One is for the derivation of stable labelings from a general allocator, and the other is for a new concept, termed arity.

 \item Allocation method for abstract dialectical frameworks:
 The algorithm is also applicable to some extensions of AFs, such as set-attacks, bipolar argumentation frameworks, and abstract dialectical frameworks.
 This is because the algorithm requires very few conditions to compute the general (local) allocators.
\end{itemize}

The rest of the paper is organized as follows.
We review the basic notions of AFs and labeling in Section~\ref{sec:AF}, and the allocation method proposed in \cite{Moriguchi18} in Section~\ref{sec:allocation}.
In Section~\ref{sec:equation}, we propose the algorithm to compute general allocators for the given framework.
In Section~\ref{sec:localallocation}, we show a local allocation method and extension of the algorithm.
In Section~\ref{sec:discussion}, we discuss stability in the allocation method and numbers of variables in general allocators.
In Section~\ref{sec:relatedwork}, we compare the method to some methods for calculating labelings and show applicability of the method for extensions of the frameworks.
Finally, we conclude this paper in Section~\ref{sec:conclusion}.

\section{Argumentation Framework} \label{sec:AF}
We will begin by defining AFs, using the terms employed to express Dung's concepts in \cite{Dung95}.

\begin{defn}
An argumentation framework (AF) is a pair of a set of arguments and their attack relations (i.e., binary relation on arguments).
We use $\langle \AR, \ATT \rangle$ to denote an AF.
\end{defn}

The definition of labeling (with regard to semantics) offered by Caminada~\cite{Caminada06} is as follows.

\begin{defn}
Labeling in terms of frameworks is a function from arguments to labels, i.e., $L : \AR \rightarrow \{ \Lin, \: \Lundec, \: \Lout \}$.
Labeling $L$ is complete iff the following conditions are satisfied.
\begin{itemize}
 \item $L(A) = \Lin$ iff $L(A') = \Lout$ for all arguments $A'$ such that $(A', A) \in \ATT$.
 \item $L(A) = \Lout$ iff there exists an argument $A'$ such that $(A', A) \in \ATT$ and $L(A') = \Lin$.
 \item $L(A) = \Lundec$ iff there exists an argument $A'$ such that $(A', A) \in \ATT$ and $L(A') = \Lundec$  and there are no arguments $A''$ such that $(A'', A) \in \ATT$ and $L(A'') = \Lin$.
\end{itemize}
The complete labeling $L$ is \emph{grounded} iff $\{ A | L(A) = \Lin \}$ is the smallest in those complete labelings w.r.t. set inclusion.
We use $L_g$ for grounded labeling.
The complete labeling $L$ is \emph{stable} iff $\{ A | L(A) = \Lundec \}$ is empty.
\end{defn}
The arguments labeled `$\Lin$' are considered to be accepted.

\begin{example} \label{example:labeling}
In Figure~\ref{fig:AFexample1}, we illustrate two AFs, (a) and (b).
The left framework (a) is an acyclic framework, with only one complete labeling, $L(1) = L(3) = L(5) = \Lin$ and $L(2) = L(4) = \Lout$.
As the complete labeling is unique, it also constitutes a grounded labeling.

The right framework (b) incorporates a cyclic part between $1$ and $2$.
There are three complete labelings, $L_1$, $L_2$ and $L_3$:
\begin{itemize}
 \item $L_1(A) = \Lundec$ for all arguments $A$. This is a grounded labeling.
 \item $L_2(1) = L_2(4) = \Lin$ and $L_2(2) = L_2(3) = \Lout$.
 \item $L_3(2) = L_3(4) = \Lin$ and $L_3(1) = L_3(3) = \Lout$.
\end{itemize}
\end{example}

\begin{figure}[tbp]
 \begin{center}
 \includegraphics[height=2cm]{./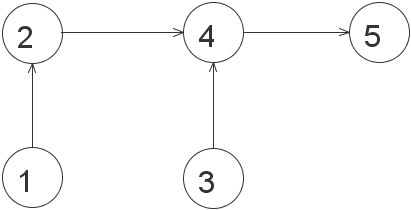}\hspace{2cm}
 \includegraphics[height=2cm]{./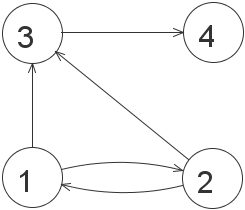}
 \end{center}
 \caption{Two examples of argumentation frameworks. (a) Left: acyclic graph, and (b) right: cyclic graph.}
 \label{fig:AFexample1}
\end{figure}

From these labelings, we can see that arguments $1$ and $2$ always have opposite labels, and $3$ and $4$ are labeled $\Lundec$ only when $1$ and $2$ are labeled $\Lundec$\footnote{Here, ``opposite'' means the negation in three-valued logic. `$\Lin$' and `$\Lout$' are opposite to each other, and `$\Lundec$' is opposite to itself.}.
However, each labeling does not, by itself, imply such an observation.

\section{Allocation Method} \label{sec:allocation}
Here, we explain the allocation method proposed in \cite{Moriguchi18}.
Before introducing the method, we explain the relevant three-valued logical expressions.

\subsection{Three-valued Logical Expression} \label{sec:three-valued}
Here, we define the three-valued logical expressions (henceforth, expressions for short) as follows:
$$ \begin{array}{l@{\quad}r@{\quad}l}
p & \mbox{::=} & T \quad | \quad F \quad | \quad U \quad | \quad x \quad | \quad  \neg p \quad | \quad p \land p \quad | \quad p \lor p \\
\end{array}$$
where $x$ is a variable (an element of $\Var$) and $T$, $F$, and $U$ are constants denoting true, false, and undecided (the middle value), respectively.
The operator precedence of these connectives is as usual; $\neg$, $\land$ and $\lor$.
In this section we write $\land$ explicitly, but in the following sections, we omit $\land$ for brevity.

We define the evaluation of the expressions under valuation for the variables $v : \Var \rightarrow \{ T, F, U \}$, as Kleene's three-valued logic~\cite{Kleene52}.
\begin{center}
$\evalV{T}{v} = T, \quad \evalV{F}{v} = F, \quad \evalV{U}{v} = U, \quad \evalV{x}{v} = v(x).$ \\
\begin{tabular}{|c|c|c|c|c|} \hline
\multicolumn{2}{|c|}{\multirow{2}{*}{$\evalV{p \land q}{v}$}} & \multicolumn{3}{|c|}{$\evalV{q}{v}$} \\ \cline{3-5}
\multicolumn{2}{|c|}{}                & $T$ & $U$ & $F$ \\ \hline
\multirow{3}{*}{$\evalV{p}{v}$} & $T$ & $T$ & $U$ & $F$ \\ \cline{2-5}
                                & $U$ & $U$ & $U$ & $F$ \\ \cline{2-5}
                                & $F$ & $F$ & $F$ & $F$ \\ \hline
\end{tabular} \quad
\begin{tabular}{|c|c|c|c|c|} \hline
\multicolumn{2}{|c|}{\multirow{2}{*}{$\evalV{p \lor q}{v}$}} & \multicolumn{3}{|c|}{$\evalV{q}{v}$} \\ \cline{3-5}
\multicolumn{2}{|c|}{}                & $T$ & $U$ & $F$ \\ \hline
\multirow{3}{*}{$\evalV{p}{v}$} & $T$ & $T$ & $T$ & $T$ \\ \cline{2-5}
                                & $U$ & $T$ & $U$ & $U$ \\ \cline{2-5}
                                & $F$ & $T$ & $U$ & $F$ \\ \hline
\end{tabular} \quad
\begin{tabular}{|c|c|c|} \hline
\multicolumn{3}{|c|}{$\evalV{\neg p}{v}$} \\ \hline
\multirow{3}{*}{$\evalV{p}{v}$} & $T$ & $F$ \\ \cline{2-3}
                                & $U$ & $U$ \\ \cline{2-3}
                                & $F$ & $T$ \\ \hline
\end{tabular}
\end{center}

We also define the equivalence between expressions as $p \equiv q \Leftrightarrow \forall v, \evalV{p}{v} = \evalV{q}{v}$.
This equivalence relation is clearly reflexive, symmetric, and transitive.

Some well-known equivalences in binary logic can be also proven in this logic.
\begin{lmm} \label{lmm:three-valued-equivalence-primitive}
For any expressions $p$, $q$ and $r$, the following equivalences hold.
$$\begin{array}{c}
\neg \neg p \equiv p   \qquad   \neg T \equiv F   \qquad   \neg F \equiv T \\ 
p \land p \equiv p   \qquad   p \land q \equiv q \land p   \qquad   T \land p \equiv p   \qquad   F \land p \equiv F \\ 
p \lor  p \equiv p   \qquad   p \lor  q \equiv q \lor  p   \qquad   T \lor  p \equiv T   \qquad   F \lor  p \equiv p \\ 
p \land (q \lor r) \equiv p \land q \lor p \land r \quad \neg (p \land q) \equiv \neg p \lor \neg q \quad \neg (p \lor q) \equiv \neg p \land \neg q \\ 
\\
\begin{array}{c}
p \equiv q \\ \hline
\neg p \equiv \neg q \\
\end{array}
\qquad
\begin{array}{c}
p \equiv q \\ \hline
p \land r \equiv q \land r \\
\end{array}
\qquad
\begin{array}{c}
p \equiv q \\ \hline
p \lor r \equiv q \lor r \\
\end{array} 
\end{array}$$
\end{lmm}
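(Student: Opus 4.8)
The plan is to exploit the fact that equivalence is defined purely semantically: $p \equiv q$ holds exactly when $\evalV{p}{v} = \evalV{q}{v}$ for every valuation $v$. Since the evaluation $\evalV{\cdot}{v}$ is defined compositionally from the three truth tables for $\neg$, $\land$ and $\lor$, the value of any compound expression under a fixed $v$ is completely determined by the values its immediate subexpressions take under $v$. Consequently, each listed equivalence reduces to a finite check over the value set $\{ T, U, F \}$, and I would organise the proof around this reduction rather than reasoning about the syntactic shapes of $p$, $q$ and $r$.

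First I would make the compositionality explicit as a small observation: for a fixed valuation $v$, the triple $(\evalV{p}{v}, \evalV{q}{v}, \evalV{r}{v})$ lies in $\{ T, U, F \}^3$, and the value of any expression built from $p$, $q$, $r$ by $\neg$, $\land$, $\lor$ is obtained by reading off the corresponding entries of the three tables applied to these values. This is an immediate induction on the structure of the defining grammar, using the base clauses $\evalV{T}{v} = T$, $\evalV{F}{v} = F$, $\evalV{U}{v} = U$, $\evalV{x}{v} = v(x)$ together with the three tabulated operators.

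With that in hand, each equivalence in the first block becomes a truth-table verification. For the unary laws ($\neg \neg p \equiv p$, $\neg T \equiv F$, $\neg F \equiv T$) I would check the three rows of the $\neg$ table. For the laws involving a single expression $p$ (idempotence $p \land p \equiv p$, $p \lor p \equiv p$; the unit and zero laws $T \land p \equiv p$, $F \land p \equiv F$, $T \lor p \equiv T$, $F \lor p \equiv p$) it suffices to let $\evalV{p}{v}$ range over the three values and compare the relevant column of the $\land$ or $\lor$ table. The commutativity laws follow from the symmetry of those two tables across the diagonal. The remaining three laws — distributivity $p \land (q \lor r) \equiv p \land q \lor p \land r$ and the two De Morgan laws — involve at most three expressions, so letting $(\evalV{p}{v}, \evalV{q}{v}, \evalV{r}{v})$ range over the finitely many combinations and evaluating both sides entry by entry settles them. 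The congruence rules at the bottom are even more direct: assuming $\evalV{p}{v} = \evalV{q}{v}$ for all $v$, the value of $\neg p$, $p \land r$, or $p \lor r$ under $v$ depends only on $\evalV{p}{v}$ (and $\evalV{r}{v}$), so substituting the equal value immediately gives $\evalV{\neg p}{v} = \evalV{\neg q}{v}$, and likewise for the binary cases.

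The only point requiring care — and the place where a naive argument could slip — is the reduction itself: because $p$, $q$ and $r$ are arbitrary expressions rather than variables, I cannot assume that all value combinations in $\{ T, U, F \}^3$ are actually realised, but I do not need to. It is enough that each identity holds for every combination in $\{ T, U, F \}^3$; whatever combination a given $v$ happens to produce is then automatically covered. So the main obstacle is not conceptual but bookkeeping: the genuine work is the finite, mechanical tabulation, and the burden is simply keeping the case enumeration complete and correct.
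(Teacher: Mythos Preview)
Your proposal is correct: a direct semantic verification via the truth tables, reducing each equivalence to a finite check over $\{T, U, F\}$, is exactly the right way to establish these identities, and your observation that not all value combinations need be realised (only that the identity holds for every combination) is the right way to handle arbitrary expressions $p$, $q$, $r$. The paper itself does not supply a proof for this lemma --- it is stated as a collection of well-known equivalences carried over from binary logic and left to the reader --- so there is nothing to compare against; your argument fills that gap in the natural way.
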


However, some equivalences are not satisfied in three-valued logic.
\begin{lmm} \label{lmm:three-valued-not-equivalence}
Assume that an expression $p$ is not equivalent to either $T$ or $F$.
Then, $p \lor \neg p \not\equiv T$ and $p \land \neg p \not\equiv F$.
\end{lmm}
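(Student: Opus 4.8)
The plan is to reduce both non-equivalences to a single existence claim: that there is a valuation $v^*$ under which $\evalV{p}{v^*} = U$. This is enough, because if $\evalV{p}{v^*} = U$ then $\evalV{\neg p}{v^*} = U$ as well (negation fixes $U$), so reading the centre cells of the $\lor$ and $\land$ tables gives $\evalV{p \lor \neg p}{v^*} = U \neq T$ and $\evalV{p \land \neg p}{v^*} = U \neq F$. Hence, once such a $v^*$ is produced, both conclusions follow at once, since the hypothesis concerns equivalence to the constants $T$ and $F$.

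To produce $v^*$, I would argue the contrapositive of the existence claim: if $p$ never takes the value $U$ under any valuation, then $p \equiv T$ or $p \equiv F$, contradicting the assumption. The engine for this is monotonicity of evaluation with respect to the information ordering $\sqsubseteq$ on $\{T, F, U\}$ defined by $U \sqsubseteq T$ and $U \sqsubseteq F$, with $T$ and $F$ incomparable and maximal. I would first show, by structural induction on $p$ using the evaluation tables for $\neg$, $\land$, $\lor$, that whenever $v(x) \sqsubseteq v'(x)$ for every variable $x$, we have $\evalV{p}{v} \sqsubseteq \evalV{p}{v'}$. The base cases (constants and variables) are immediate, and each inductive step reduces to checking that the relevant connective table is monotone in each argument.

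With monotonicity established, let $v_U$ be the valuation sending every variable to $U$; it is the $\sqsubseteq$-least valuation, so $\evalV{p}{v_U} \sqsubseteq \evalV{p}{v}$ for all $v$. If $p$ were never $U$, then $a := \evalV{p}{v_U}$ would lie in $\{T, F\}$ and hence be $\sqsubseteq$-maximal; combined with $a \sqsubseteq \evalV{p}{v}$ this forces $\evalV{p}{v} = a$ for every $v$, i.e. $p \equiv a$ with $a \in \{T, F\}$. That is precisely the contradiction needed, so a valuation $v^*$ with $\evalV{p}{v^*} = U$ must exist.

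I expect the only genuine obstacle to be the monotonicity induction, and even that is routine once the ordering is fixed: the one point demanding care is verifying that each entry of the $\land$ and $\lor$ tables respects $\sqsubseteq$ — that raising an argument from $U$ to $T$ or $F$ never lowers the output in the information order. Everything else, namely the reduction to the existence of $v^*$ and the maximality argument, is a direct reading of the truth tables.
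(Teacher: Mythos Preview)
Your argument is correct. Both you and the paper ultimately reduce the lemma to the single fact that $\evalV{p}{v_U} = U$ whenever $p$ is equivalent to neither $T$ nor $F$, and then read off the two non-equivalences from the truth tables; the difference lies in how that fact is obtained. The paper defers the proof to its later Lemma~\ref{lmm:expressionpattern} and Corollary~\ref{coro:undecvaluation}: by a structural induction it shows every expression is equivalent either to $T$, to $F$, or to a $T$/$F$-free expression, and the latter evaluate to $U$ under $v_U$ (Lemma~\ref{lmm:undecvaluation}). You instead prove monotonicity of evaluation in the information order $\sqsubseteq$ and exploit that $v_U$ is $\sqsubseteq$-least and $T,F$ are $\sqsubseteq$-maximal. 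Your route is the standard semantic argument for Kleene's strong logic, is self-contained, and avoids the forward reference; the paper's route costs a heavier induction but yields a syntactic normal form (no $T$/$F$ occurrences) that it reuses elsewhere as a decision procedure. Either induction is routine, and your identification of the monotonicity check on the $\land$/$\lor$ tables as the only point needing care is accurate.
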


Let us consider more general properties of three-valued logic.
In contrast to binary logic, logical constants $T$ and $F$ play unique roles in logical expressions.

\begin{lmm} \label{lmm:undecvaluation}
Let the valuation $v_U$ be defined as $v_U(x) = U$ for every variable $x$.
If an expression $p$ does not include any occurrence of $T$ and $F$, $\evalV{p}{v_U} = U$, i.e., $p \not\equiv T$ and $p \not\equiv F$.
\end{lmm}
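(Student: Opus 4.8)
The plan is to establish the central claim $\evalV{p}{v_U} = U$ by structural induction on the expression $p$, and then read off the two non-equivalences as immediate consequences. The engine of the induction is a single observation drawn from the evaluation tables: the middle value is a fixed point of every connective, i.e. $\neg U = U$, $U \land U = U$, and $U \lor U = U$. Once this is noted, the argument is essentially forced. For the base cases, the hypothesis that $p$ contains no occurrence of $T$ or $F$ means the only atomic possibilities are $p = U$ and $p = x$ for some variable $x$; in the former $\evalV{U}{v_U} = U$ by definition, and in the latter $\evalV{x}{v_U} = v_U(x) = U$ by the definition of $v_U$.

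For the inductive step, I would consider a compound $p$ built without $T$ or $F$, which must have one of the forms $\neg q$, $q \land q'$, or $q \lor q'$, where each of $q$ and $q'$ is itself free of $T$ and $F$ (any such constant occurring in a subexpression would already be an occurrence in $p$). The induction hypothesis then gives $\evalV{q}{v_U} = U$ and, where relevant, $\evalV{q'}{v_U} = U$, and applying the fixed-point facts above yields $\evalV{p}{v_U} = U$ in each of the three cases. This closes the induction and proves $\evalV{p}{v_U} = U$ for every $T,F$-free expression $p$.

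Finally, the two non-equivalences follow by contradiction. If $p \equiv T$, then in particular $\evalV{p}{v_U} = \evalV{T}{v_U} = T$, contradicting the equation just proved; the argument for $p \not\equiv F$ is identical. As for difficulty, there is essentially no obstacle here: the only point requiring genuine care is threading the side condition ``$p$ contains no $T$ or $F$'' correctly through the induction, so that the subexpressions $q$ and $q'$ are guaranteed to inherit this property and the induction hypothesis legitimately applies. Conceptually, the lemma simply records that the constant-free fragment of the logic cannot escape the value $U$ once all variables are themselves set to $U$.
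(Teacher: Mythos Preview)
Your proposal is correct and matches the paper's approach exactly: the paper simply states that the lemma ``follows straightforwardly by induction on $p$,'' and your write-up is precisely that induction spelled out in full, together with the immediate derivation of $p \not\equiv T$ and $p \not\equiv F$.
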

This follows straightforwardly by induction on $p$.

\begin{lmm} \label{lmm:expressionpattern}
Any expression $p$ satisfies exactly one of the following conditions.
\begin{enumerate}
 \item $p$ is equivalent to $T$.
 \item $p$ is equivalent to $F$.
 \item There is an expression $p'$ equivalent to $p$ such that $p'$ does not contain any occurrences of $T$ or $F$.
\end{enumerate}
\end{lmm}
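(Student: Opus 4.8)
The plan is to separate the claim into two parts: that the three conditions are mutually exclusive, and that at least one of them holds for every $p$. The exclusivity is the easy part and relies on the earlier lemmas. Conditions~1 and~2 cannot both hold, because $\evalV{T}{v} = T \neq F = \evalV{F}{v}$ for every valuation $v$, so $T \not\equiv F$. To rule out condition~3 coinciding with either of the first two, I would appeal to Lemma~\ref{lmm:undecvaluation}: if $p \equiv p'$ with $p'$ containing no occurrence of $T$ or $F$, then $\evalV{p'}{v_U} = U$, so $p \not\equiv T$ and $p \not\equiv F$. Hence at most one of the conditions can hold.

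For the existence part I would argue by structural induction on $p$, using Lemma~\ref{lmm:three-valued-equivalence-primitive} throughout. The base cases are immediate: $T$ and $F$ fall under conditions~1 and~2, while $U$ and any variable $x$ are already free of $T$ and $F$ and hence witness condition~3 directly. In the inductive step the congruence rules of Lemma~\ref{lmm:three-valued-equivalence-primitive} allow me to substitute each subexpression by an equivalent witness supplied by the induction hypothesis, while the constant laws ($\neg T \equiv F$, $\neg F \equiv T$, $T \land p \equiv p$, $F \land p \equiv F$, $T \lor p \equiv T$, $F \lor p \equiv p$) let me absorb or eliminate the constants $T$ and $F$ wherever they arise. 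For $p = \neg q$, negation interchanges conditions~1 and~2 and preserves condition~3, since $\neg q'$ is again constant-free. For $p = q \land r$ I would enumerate the combinations of cases for $q$ and $r$: if either operand is equivalent to $F$ the conjunction is equivalent to $F$ (condition~2); if either is equivalent to $T$ it is absorbed and the conjunction inherits the other operand's condition; and if both are equivalent to constant-free expressions $q'$ and $r'$, then $q' \land r'$ witnesses condition~3. The case $p = q \lor r$ is dual.

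The routine part of the argument is the bookkeeping over these combinations, and I expect the only real (if mild) obstacle to be phrasing the $\land$ and $\lor$ case analysis cleanly enough that each combination is visibly routed to exactly one of the three conclusions. The single subtle point is to confirm that the witness built for condition~3 genuinely contains no $T$ or $F$; this holds because the constant laws remove every constant before a connective is ever applied to two already constant-free subexpressions.
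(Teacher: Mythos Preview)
Your proposal is correct and follows essentially the same approach as the paper: a structural induction on $p$ with the same case analysis for the connectives, using the constant laws of Lemma~\ref{lmm:three-valued-equivalence-primitive} to absorb $T$ and $F$ and invoking Lemma~\ref{lmm:undecvaluation} to confirm that a constant-free witness cannot be equivalent to $T$ or $F$. The only cosmetic difference is that you factor out the mutual-exclusivity argument up front, whereas the paper folds it into the inductive cases; your organization is arguably cleaner.
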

\begin{proof}
This is proven by induction on the expression $p$.

When $p$ is $T$ (or $F$), $p$ satisfies the first (the second) condition.
When $p$ is $U$ or a variable, it is clear that $p$ only satisfies the third condition.

When $p = p_1 \land p_2$, we apply $p_1$ to the induction hypothesis.
1) If $p_1$ is equivalent to $T$, then $p$ is equivalent to $p_2$. According to induction hypothesis for $p_2$, $p$ satisfies the proposition.
2) If $p_1$ is equivalent to $F$, then $p$ is equivalent to $F$, i.e., $p$ satisfies the second condition.
3) Otherwise, $p_1$ is equivalent to $p_1'$, which do not include any occurrences of $T$ or $F$.
By applying $p_2$ to the induction hypothesis, we can split 3 cases:
3-1) If $p_2$ is equivalent to $T$, then $p$ is equivalent to $p_1'$, which satisfies the third condition.
3-2) If $p_2$ is equivalent to $F$, then $p$ is equivalent to $F$, i.e., $p$ satisfies the second condition.
3-3) Otherwise, $p_2$ is equivalent to $p_2'$, which do not include any occurrences of $T$ or $F$.
In this case, $p$ is equivalent to $p_1' \land p_2'$, which do not include any occurrences of $T$ or $F$.
From Lemma~\ref{lmm:undecvaluation}, $p_1' \land p_2'$ is not equivalent to $T$ or $F$, so $p$ only satisfies the third condition.

The case such that $p = p_1 \lor p_2$ or $p = \neg p_1$ is similar to the case of $p_1 \land p_2$.
\end{proof}

For example, $x_1 \land (x_2 \lor T)$ is equivalent to $x_1$ (satisfying the third condition), but not equivalent to $T$ nor $F$.
This is proven by the following process:
The subexpression $x_2 \lor T$ is equivalent to $T$, and $x_1 \land T$ is equivalent to $x_1$.
Such recursive process is the decision procedure whether the given expression is equivalent to $T$ or $F$, or not equivalent to both.
Simpler version of the decision procedure is derived from the following corollary.

\begin{coro} \label{coro:undecvaluation}
An expression $p$ is not equivalent to both $T$ or $F$ iff $\evalV{p}{v_U} = U$.
\end{coro}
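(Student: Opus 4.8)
The plan is to prove both directions of the biconditional, leaning on the trichotomy of Lemma~\ref{lmm:expressionpattern} for the forward implication and on a direct evaluation of the constants for the converse. I read the statement ``$p$ is not equivalent to both $T$ or $F$'' as the conjunction $p \not\equiv T$ and $p \not\equiv F$.

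For the converse direction, I would argue by contraposition. Suppose $p$ is equivalent to $T$ or to $F$. If $p \equiv T$, then by the definition of $\equiv$ we have $\evalV{p}{v_U} = \evalV{T}{v_U} = T$, and if $p \equiv F$ then $\evalV{p}{v_U} = \evalV{F}{v_U} = F$; in either case $\evalV{p}{v_U} \neq U$. Hence the assumption $\evalV{p}{v_U} = U$ forces both $p \not\equiv T$ and $p \not\equiv F$.

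For the forward direction, assume $p \not\equiv T$ and $p \not\equiv F$. By Lemma~\ref{lmm:expressionpattern}, $p$ satisfies exactly one of the three listed conditions; since the first two are ruled out by the assumption, the third must hold, so there is an expression $p'$ with $p \equiv p'$ such that $p'$ contains no occurrence of $T$ or $F$. Lemma~\ref{lmm:undecvaluation} then yields $\evalV{p'}{v_U} = U$, and since $p \equiv p'$ gives $\evalV{p}{v_U} = \evalV{p'}{v_U}$ by the definition of equivalence, we conclude $\evalV{p}{v_U} = U$.

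The only subtlety, rather than a genuine obstacle, is that the forward direction depends essentially on the normal-form guarantee supplied by Lemma~\ref{lmm:expressionpattern}: without it, knowing merely that $p$ is not equivalent to a constant would not immediately hand us a $T/F$-free representative $p'$ to which Lemma~\ref{lmm:undecvaluation} can be applied. Everything else reduces to a mechanical appeal to the definition of $\equiv$ and to the evaluation tables for the constants.
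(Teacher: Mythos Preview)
Your proof is correct and matches the paper's intended derivation: the corollary is stated without proof immediately after Lemma~\ref{lmm:expressionpattern}, and your argument spells out precisely how it follows from Lemma~\ref{lmm:undecvaluation} together with the trichotomy of Lemma~\ref{lmm:expressionpattern}, which is exactly the route the paper implies by labeling it a corollary.
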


From Lemma~\ref{lmm:expressionpattern} and Corollary~\ref{coro:undecvaluation}, we can classify an expression by evaluating it with $v_U$;
the expression is equivalent to $T$ if the result is $T$, $F$ if the result is $F$, and neither $T$ nor $F$ otherwise (i.e., the result is $U$).
For example, Lemma~\ref{lmm:three-valued-not-equivalence} can be proven by this procedure.

We apply substitution to an expression and a valuation, respectively, replacing the variables with other expressions.
\begin{defn}
$p[p'/x]$ for expressions $p$ and $p'$, and variable $x$ are defined as follows.
\begin{itemize}
 \item If $p$ is a logical constant, $p[p'/x] = p$.
 \item $x[p'/x] = p'$ and $y[p'/x] = y$ if $y \neq x$.
 \item $(\neg p)[p'/x] = \neg (p[p'/x])$, $(p \land q)[p'/x] = p[p'/x] \land q[p'/x]$, $(p \lor q)[p'/x] = p[p'/x] \lor q[p'/x]$.
\end{itemize}
We also use similar notation $v[p'/x]$ for the valuation $v$, expression $p'$, and variable $x$ to denote $v[p'/x](x) = \evalV{p'}{v}$ and $v[p'/x](y) = v(y)$ if $x \neq y$.
\end{defn}

\begin{lmm}\label{lmm:substitution}
Let $v$ be a valuation, $p$ and $p'$ be expressions, and $x$ be a variable.
Then, $\evalV{p[p'/x]}{v} = \evalV{p}{v[p'/x]}$.
\end{lmm}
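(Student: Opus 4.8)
The plan is to prove the statement by structural induction on the expression $p$, following the grammar that defines expressions, with the valuation $v$, the expression $p'$, and the variable $x$ fixed but arbitrary. The induction hypothesis then states that $\evalV{q[p'/x]}{v} = \evalV{q}{v[p'/x]}$ holds for every immediate subexpression $q$ of $p$. The base cases are the constants $T$, $F$, $U$, the variable $x$ itself, and a variable $y$ distinct from $x$; the inductive cases are $\neg p_1$, $p_1 \land p_2$, and $p_1 \lor p_2$.

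First I would dispatch the base cases. If $p$ is a constant, then $p[p'/x] = p$ and a constant evaluates to itself under any valuation, so both sides equal that constant. If $p = y$ with $y \neq x$, then $y[p'/x] = y$, and the left-hand side is $v(y)$ while the right-hand side is $v[p'/x](y) = v(y)$ by the definition of the modified valuation; the two agree. The one case that carries the actual content of the lemma is $p = x$: here $x[p'/x] = p'$, so the left-hand side is $\evalV{p'}{v}$, whereas the right-hand side is $\evalV{x}{v[p'/x]} = v[p'/x](x)$, which by definition is exactly $\evalV{p'}{v}$. Thus the two sides coincide precisely because the modified valuation was defined so as to record the value of $p'$ under $v$.

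For the inductive cases I would rely on the fact that the evaluation map is compositional: the truth tables for $\neg$, $\land$, and $\lor$ determine $\evalV{\cdot}{v}$ on a compound expression solely from the values of its immediate subexpressions. For instance, when $p = p_1 \land p_2$, the definition of substitution gives $(p_1 \land p_2)[p'/x] = p_1[p'/x] \land p_2[p'/x]$, so evaluating the left-hand side reduces, by compositionality, to combining $\evalV{p_1[p'/x]}{v}$ and $\evalV{p_2[p'/x]}{v}$ via the $\land$ table. The induction hypothesis rewrites these as $\evalV{p_1}{v[p'/x]}$ and $\evalV{p_2}{v[p'/x]}$, and combining them with the same table yields $\evalV{p_1 \land p_2}{v[p'/x]}$. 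The cases $p = p_1 \lor p_2$ and $p = \neg p_1$ are identical in structure, using the $\lor$ and $\neg$ tables respectively.

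I expect no genuine obstacle in this argument; the only point requiring care is the variable case $p = x$, where one must invoke the defining clause $v[p'/x](x) = \evalV{p'}{v}$ rather than any truth table, since this is the sole place where the substituted expression's value enters. Everywhere else the proof is a mechanical propagation of the induction hypothesis through the compositional definition of evaluation.
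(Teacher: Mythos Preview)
Your proof is correct and follows essentially the same approach as the paper: structural induction on $p$, with the only substantive case being $p = x$, where one appeals to the defining clause $v[p'/x](x) = \evalV{p'}{v}$. The paper's proof is terser (it bundles the compound cases into a single remark about the induction hypothesis), but the argument is identical.
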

\begin{proof}
The proof follows by induction on $p$.
$p$ is not a variable; it is proven directly from the induction hypothesis (or straightforward in the case of logical constants).
If $p = x$, $\evalV{p[p'/x]}{v} = \evalV{x[p'/x]}{v} = \evalV{p'}{v} = \evalV{x}{v[p'/x]} = \evalV{p}{v[p'/x]}$.
If $p \neq x$, $\evalV{p[p'/x]}{v} = \evalV{p}{v} = v(p) = v[p'/x](p) = \evalV{p}{v[p'/x]} = \evalV{p}{v[p'/x]}$.
Therefore, $\evalV{p[p'/x]}{v} = \evalV{p}{v[p'/x]}$.
\end{proof}

\begin{thrm}\label{thrm:subequiv}
Let $p_1$, $p_2$ and $p'$ be expressions, and $x$ be a variable.
If $p_1 \equiv p_2$, then $p_1[p'/x] \equiv p_2[p'/x]$.
\end{thrm}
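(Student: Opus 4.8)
The plan is to reduce the claim to Lemma~\ref{lmm:substitution} together with the definition of equivalence. Since $p \equiv q$ means agreement under every valuation, I would first unfold the goal $p_1[p'/x] \equiv p_2[p'/x]$ as the statement that for every valuation $v$ we have $\evalV{p_1[p'/x]}{v} = \evalV{p_2[p'/x]}{v}$. Thus it suffices to fix an arbitrary valuation $v$ and establish this single equation, after which universal quantification over $v$ closes the argument.

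The key step is to apply Lemma~\ref{lmm:substitution} to both sides, rewriting $\evalV{p_1[p'/x]}{v}$ as $\evalV{p_1}{v[p'/x]}$ and $\evalV{p_2[p'/x]}{v}$ as $\evalV{p_2}{v[p'/x]}$. This moves the substitution off the expression and into the valuation, so that the two evaluations now take place under one and the same valuation $w := v[p'/x]$, which is precisely the form in which the hypothesis can be used.

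Finally, I would invoke the hypothesis $p_1 \equiv p_2$. By definition this yields $\evalV{p_1}{w} = \evalV{p_2}{w}$ for every valuation $w$, and in particular for $w := v[p'/x]$. Chaining the three equalities gives $\evalV{p_1[p'/x]}{v} = \evalV{p_1}{v[p'/x]} = \evalV{p_2}{v[p'/x]} = \evalV{p_2[p'/x]}{v}$; since $v$ was arbitrary, we conclude $p_1[p'/x] \equiv p_2[p'/x]$, as required.

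There is no substantive obstacle here: the substitution lemma does all the work, and the proof is a direct chain of equalities once the goal has been unfolded through the definition of equivalence. The only point requiring minor care is recognizing that the hypothesis $p_1 \equiv p_2$ must be instantiated at the \emph{shifted} valuation $v[p'/x]$ rather than at $v$ itself, and it is exactly Lemma~\ref{lmm:substitution} that makes this shifted valuation appear.
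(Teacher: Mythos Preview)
Your proposal is correct and follows essentially the same approach as the paper: fix an arbitrary valuation $v$, apply Lemma~\ref{lmm:substitution} to rewrite both $\evalV{p_1[p'/x]}{v}$ and $\evalV{p_2[p'/x]}{v}$ as evaluations under $v[p'/x]$, then invoke $p_1 \equiv p_2$ at that shifted valuation. The paper's proof is exactly this chain of equalities, only stated more tersely.
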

\begin{proof}
For any valuation $v$, from Lemma~\ref{lmm:substitution}, $\evalV{p_1[p'/x]}{v} = \evalV{p_1}{v[p'/x]}$ and $\evalV{p_2[p'/x]}{v} = \evalV{p_2}{v[p'/x]}$.
Since $p_1 \equiv p_2$, $\evalV{p_1}{v[p'/x]} = \evalV{p_2}{v[p'/x]}$.
Therefore, $\evalV{p_1[p'/x]}{v} = \evalV{p_2[p'/x]}{v}$, and thus, $p_1[p'/x] \equiv p_2[p'/x]$.
\end{proof}


\subsection{Allocation} \label{sec:allocation2}
We apply the allocation to the process of mapping each argument to an expression, and each mapping instance is called an \emph{allocator}.
The completeness of the allocation is defined in a manner similar to that of the labeling.

\begin{defn} \label{def:allocator}
An allocator $E$ for $\langle \AR, \ATT \rangle$ is complete iff for any argument $A \in \AR$, $E(A) \equiv \bigwedge_{(A', A) \in \ATT} \neg E(A')$.
\end{defn}
Note that if there are no $A'$ satisfying $(A', A) \in \ATT$, $E(A) \equiv T$.

For example, (a) in Figure~\ref{fig:AFexample1} has a complete allocator $E$ such that $E(1) = E(3) = E(5) = T$ and $E(2) = E(4) = F$.
There are, of course, an infinite number of allocators such that $E(3) \equiv T$ but $E(3) \neq T$.
For the purposes of this paper, however, equivalent allocators are irrelevant.

The following theorem demonstrates that allocation constitutes a generalization of labeling.
\begin{thrm} \label{thm:label}
For any complete labeling $L$, the allocator $E$ such that $E(A) = T$ iff $L(A) = \Lin$, $E(A) = F$ iff $L(A) = \Lout$, and $E(A) = U$ iff $L(A) = \Lundec$ is complete.
\end{thrm}

A complete allocator mapping only logical constants ($T$, $F$ or $U$) is called a \emph{constant} allocator.
The inverse of the above theorem is also valid.
\begin{thrm} \label{thm:labelinverse}
For any constant allocator $E$, labeling $L$ such that $L(A) = \Lin$ if $E(A) = T$, $L(A) = \Lout$ if $E(A) = F$, and $L(A) = \Lundec$ if $E(A) = U$ is a complete labeling.
\end{thrm}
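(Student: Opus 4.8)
The plan is to read the constant value of each $E(A)$ directly off the completeness equation for the allocator and then translate it into the three labeling conditions. First I would note that the three constants $T$, $F$, and $U$ are pairwise non-equivalent, since they take distinct values under any valuation; hence a constant allocator assigns to each $A$ exactly one of them, so the three defining clauses for $L$ are mutually exclusive and jointly exhaustive, and $L$ is a well-defined labeling. It therefore suffices to check, argument by argument, that the value $E(A)$ matches the corresponding labeling condition.

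Fix an argument $A$ and consider the completeness condition $E(A) \equiv \bigwedge_{(A', A) \in \ATT} \neg E(A')$. Because $E$ is a constant allocator, every $E(A')$ is a logical constant, so the right-hand side contains no variables; evaluating both sides under $v_U$ (any valuation agrees on a closed expression) reads off its constant value, and since $E(A)$ is itself a constant, $E(A)$ equals that value. Using the negation table, $\neg E(A')$ evaluates to $F$ when $E(A') = T$ (i.e. $L(A') = \Lin$), to $T$ when $E(A') = F$ (i.e. $L(A') = \Lout$), and to $U$ when $E(A') = U$ (i.e. $L(A') = \Lundec$).

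It then remains to determine the value of the conjunction from Kleene's $\land$-table: a conjunction of constants equals $F$ iff at least one conjunct is $F$, equals $U$ iff no conjunct is $F$ but at least one is $U$, and equals $T$ otherwise (all conjuncts $T$, including the empty conjunction, which is $T$). Matching these three cases against the translation above yields exactly the three completeness clauses for $L$: $E(A) = F$, i.e.\ $L(A) = \Lout$, holds iff some attacker $A'$ has $L(A') = \Lin$; $E(A) = U$, i.e.\ $L(A) = \Lundec$, holds iff no attacker is labeled $\Lin$ but some attacker is labeled $\Lundec$; and $E(A) = T$, i.e.\ $L(A) = \Lin$, holds iff every attacker is labeled $\Lout$ (including the case of no attackers).

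The main obstacle is the careful case analysis establishing the value of the conjunction---in particular that $F$ is absorbing ($F \land p \equiv F$) and that a conjunction whose conjuncts are all $T$ or $U$ with at least one $U$ equals $U$---which follows directly from Kleene's $\land$-table together with Lemma~\ref{lmm:three-valued-equivalence-primitive}. Once this characterization is in hand, both directions of each ``iff'' in the definition of complete labeling follow simultaneously, since $E(A)$ takes exactly one constant value and that value is forced by the conjunction.
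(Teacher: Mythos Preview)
The paper does not actually supply a proof of this theorem; it is stated without argument, as one of the basic facts carried over from the authors' earlier work~\cite{Moriguchi18}. Your direct verification---reading the constant value of $E(A)$ off the completeness equation and matching the three outcomes of a Kleene conjunction of constants against the three labeling clauses---is correct and is precisely the natural argument one would expect here. The only point worth making explicit (which you do implicitly) is that a \emph{constant allocator} is by definition complete, so the equivalence $E(A) \equiv \bigwedge_{(A',A)\in\ATT} \neg E(A')$ is available as a hypothesis; once that is granted, the case analysis on the $\land$-table is routine.
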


\begin{example} \label{example:variable}
(b) in Figure~\ref{fig:AFexample1} has three constant allocators corresponding to complete labelings.
Simultaneously, it also has a complete allocator $E$ such that $E(1) = a$, $E(2) = \neg a$, $E(3) = a \neg a$ and $E(4) = a \lor \neg a$.
\begin{itemize}
 \item When the valuation $v$ is defined as $v(a) = T$, these expressions are evaluated to $T$, $F$, $F$, $T$, respectively. This result corresponds to $L_2$, as described in Example~\ref{example:labeling}.
 \item When $v(a) = F$, they are $F$, $T$, $F$, $T$, respectively. This result corresponds to $L_3$.
 \item When $v(a) = U$, they are all $U$. This result corresponds to $L_1$.
\end{itemize}
\end{example}
The purpose of this observation is to show that $E$ abstracts these labelings.
Note that a complete allocator is required to allocate logical expressions with variables to arguments \emph{only if} the arguments are in cycles of attack relations (see Theorem~\ref{thm:grounded}).

We demonstrate the relationship between $E$ and constant allocators corresponding to the labelings.
For any complete allocator, a variable may be replaced with an expression.
\begin{thrm} \label{thm:instantiate}
For the allocator $E$, we write $E[p/x](A) = E(A)[p/x]$ where $p$ is an expression and $x$ is a variable.
Consequently, for any complete allocator $E$, $E[p/x]$ is also complete.
\end{thrm}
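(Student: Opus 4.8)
The plan is to reduce the claim directly to the substitution-respects-equivalence result, Theorem~\ref{thrm:subequiv}. Fix an arbitrary argument $A \in \AR$. By the definition of $E[p/x]$, showing that $E[p/x]$ is complete amounts to verifying, for this $A$, the equivalence
\[
E(A)[p/x] \equiv \bigwedge_{(A', A) \in \ATT} \neg\bigl(E(A')[p/x]\bigr).
\]
Since $E$ is complete, we already have $E(A) \equiv \bigwedge_{(A', A) \in \ATT} \neg E(A')$, so the entire argument consists of transporting this known equivalence through the substitution $[p/x]$ and then pushing the substitution inward past $\neg$ and $\land$.

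First I would apply Theorem~\ref{thrm:subequiv} to the completeness equation of $E$: since the two sides are equivalent, substituting $p$ for $x$ preserves the equivalence, giving
\[
E(A)[p/x] \equiv \Bigl(\bigwedge_{(A', A) \in \ATT} \neg E(A')\Bigr)[p/x].
\]
It then remains to simplify the right-hand side. By the clauses of the substitution definition, $(\neg q)[p/x] = \neg(q[p/x])$ and $(q \land r)[p/x] = q[p/x] \land r[p/x]$, so substitution commutes with negation and with each binary conjunction. Applying these clauses repeatedly rewrites the right-hand side as $\bigwedge_{(A', A) \in \ATT} \neg(E(A')[p/x])$, which is exactly $\bigwedge_{(A', A) \in \ATT} \neg E[p/x](A')$ by the definition of $E[p/x]$. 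Combined with $E(A)[p/x] = E[p/x](A)$, this is the required completeness condition for $E[p/x]$ at $A$, and since $A$ was arbitrary, $E[p/x]$ is complete.

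The only point requiring slight care---and the step I expect to be the main (if minor) obstacle---is the passage from the binary substitution clauses to the indexed conjunction $\bigwedge$. I would make this precise by an easy induction on the number of attackers of $A$: the inductive step uses the $\land$-clause once, and the base case, where $A$ has no attacker, is handled by the convention noted after Definition~\ref{def:allocator} that the empty conjunction is $T$, together with the fact that substitution leaves logical constants unchanged ($T[p/x] = T$). This base case confirms that $E[p/x](A) \equiv T$ exactly when $A$ has no attackers, matching the completeness requirement there. No three-valued subtleties intervene, because the whole argument is purely syntactic manipulation justified by Theorem~\ref{thrm:subequiv} and the definition of substitution.
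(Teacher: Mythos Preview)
Your proof is correct, and it is the natural argument: apply Theorem~\ref{thrm:subequiv} to the completeness equivalence and then distribute the substitution through $\neg$ and $\land$ using the syntactic clauses of the substitution definition, handling the empty-attacker case via $T[p/x]=T$. The paper itself does not give a proof of this theorem here---it is recalled from the authors' earlier work~\cite{Moriguchi18} as part of the review in Section~\ref{sec:allocation2}---so there is nothing to compare against; your argument is exactly the intended one, relying on precisely the machinery (Theorem~\ref{thrm:subequiv} and the substitution definition) that the paper sets up in Section~\ref{sec:three-valued} for this purpose.
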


Hereafter, we call the set of variables occurring in expressions allocated by allocator $E$, i.e. $\{ x | x \mbox{ occurs in } E(A) \mbox{ for some } A \in \AR \}$, the \emph{allocation variables} of $E$.
By replacing a variable with a logical constant, we obtain a complete allocator based on another complete allocator that has more allocation variables.
When a complete allocator $E'$ is equivalent to another allocator $E$, with some variables substituted by constants, $E'$ is said to be instantiated from $E$.
When all allocation variables of a complete allocator are replaced, the resulting allocator is equivalent to a constant allocator.

\begin{thrm} \label{thm:instantiatevaluation}
For any complete allocator $E$ and valuation $v$, an allocator $E_v$ which is defined as $E_v(A) = \evalV{E(A)}{v}$ is a constant allocator.
\end{thrm}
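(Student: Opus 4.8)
The plan is to verify the two defining properties of a constant allocator: that $E_v$ assigns only logical constants, and that it is complete. The first is immediate, since $\evalV{E(A)}{v} \in \{ T, F, U \}$ for every argument $A$ by the definition of evaluation; identifying each such value with the corresponding constant expression, $E_v(A)$ is always a constant. The entire content therefore lies in establishing completeness, namely $E_v(A) \equiv \bigwedge_{(A', A) \in \ATT} \neg E_v(A')$ for each $A$.

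I would start from the completeness of $E$, that is $E(A) \equiv \bigwedge_{(A', A) \in \ATT} \neg E(A')$. By the definition of $\equiv$ this equivalence holds under every valuation, and in particular under the fixed $v$, giving $\evalV{E(A)}{v} = \evalV{\bigwedge_{(A', A) \in \ATT} \neg E(A')}{v}$. The left-hand side is exactly $E_v(A)$ by definition, so the crux is to rewrite the right-hand side in terms of $E_v$. For this I would use the compositionality of evaluation built into the truth tables for $\neg$ and $\land$: the value $\evalV{\neg p}{v}$ depends only on $\evalV{p}{v}$, and $\evalV{p \land q}{v}$ only on $\evalV{p}{v}$ and $\evalV{q}{v}$. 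Since each $E_v(A')$ is the constant $\evalV{E(A')}{v}$ and constants are self-evaluating, we have $\evalV{E_v(A')}{v} = \evalV{E(A')}{v}$; hence, by a straightforward induction over the conjunction, $\evalV{\bigwedge_{(A', A) \in \ATT} \neg E_v(A')}{v} = \evalV{\bigwedge_{(A', A) \in \ATT} \neg E(A')}{v}$. Combining this with the previous equality yields $E_v(A) = \evalV{\bigwedge_{(A', A) \in \ATT} \neg E_v(A')}{v}$. The empty-attacker case is covered by the convention $E(A) \equiv T$, which forces $E_v(A) = T$, matching the empty conjunction.

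I do not expect a serious obstacle; the proof is essentially bookkeeping. The one point that needs care is the passage between the relation $\equiv$, which quantifies over all valuations, and the single evaluation under $v$: the computation above only establishes agreement at $v$, so I must then observe that both $E_v(A)$ and the expression $\bigwedge_{(A', A) \in \ATT} \neg E_v(A')$ are built solely from constants and hence take valuation-independent values, whereupon equivalence of these two coincides with their equality at $v$. This upgrades the equality of values to the desired equivalence, completing the proof. An alternative route, if one prefers to reuse earlier results, is to obtain $E_v$ by substituting each allocation variable $x$ with the constant $v(x)$ and appealing to Theorem~\ref{thm:instantiate} repeatedly; but that requires the allocation variables to be finite together with an extra equivalence step via Lemma~\ref{lmm:substitution}, so the direct evaluation argument seems cleaner and more general.
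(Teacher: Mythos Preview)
Your argument is correct. The paper states this theorem without proof (it is part of the review of material from \cite{Moriguchi18}), so there is nothing to compare against directly; your direct evaluation argument is the natural one and handles both requirements of a constant allocator cleanly, including the care you take in upgrading equality at $v$ to the equivalence $\equiv$ via the fact that both sides are variable-free.
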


There are some arguments to which any complete allocator allocates expressions that are equivalent to logical constants.
For example, the AF $\langle \{1, 2, 3\},$ $\{(1, 2), (2, 3), (3, 1)\} \rangle$ has a unique complete allocator $E$, such that $E(A) \equiv U$ for $A = 1,2,3$.
The following theorem shows another example.

\begin{thrm} \label{thm:grounded}
Let $L_g$ be a grounded labeling of a framework and $E$ be a complete allocator.
If $L_g(A) = \Lin$, then $E(A) \equiv T$.
Also, if $L_g(A) = \Lout$, then $E(A) \equiv F$.
\end{thrm}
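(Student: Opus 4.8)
The plan is to prove both statements simultaneously by well-founded induction on the ``derivation depth'' of an argument in the grounded labeling. Recall that $L_g$ is the least fixpoint of the grounded process, obtained by repeatedly labeling an argument $\Lin$ once all of its attackers are $\Lout$ (in particular, every unattacked argument is $\Lin$) and labeling an argument $\Lout$ once it has an attacker labeled $\Lin$; the arguments that never receive a label are exactly those with $L_g(A) = \Lundec$. Because this process is monotone and bottom-up, the set of arguments with $L_g(A) \in \{ \Lin, \Lout \}$ admits a well-founded rank $r$, given by $r(A) = 0$ when $A$ is unattacked; $r(A) = 1 + \max \{ r(A') : (A', A) \in \ATT \}$ when $L_g(A) = \Lin$ and $A$ has attackers (all of which then satisfy $L_g(A') = \Lout$); and $r(A) = 1 + \min \{ r(A') : (A', A) \in \ATT, \; L_g(A') = \Lin \}$ when $L_g(A) = \Lout$. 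In every case the attackers referenced on the right-hand side have strictly smaller rank, so strong induction on $r(A)$ applies.

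For the base case, an argument $A$ with $r(A) = 0$ is unattacked, so by the note following Definition~\ref{def:allocator} every complete allocator satisfies $E(A) \equiv T$. For the inductive step when $L_g(A) = \Lin$, each attacker $A'$ has $L_g(A') = \Lout$ and strictly smaller rank, so the induction hypothesis gives $E(A') \equiv F$; by the congruence rules and $\neg F \equiv T$ of Lemma~\ref{lmm:three-valued-equivalence-primitive} each conjunct $\neg E(A')$ is equivalent to $T$, and hence $E(A) \equiv \bigwedge_{(A', A) \in \ATT} \neg E(A') \equiv T$ using $T \land p \equiv p$. For the inductive step when $L_g(A) = \Lout$, there is an attacker $A'$ with $L_g(A') = \Lin$ and strictly smaller rank, so the induction hypothesis gives $E(A') \equiv T$ and therefore $\neg E(A') \equiv F$; since this is one of the conjuncts of $E(A) \equiv \bigwedge_{(A'', A) \in \ATT} \neg E(A'')$, the laws $p \land q \equiv q \land p$ and $F \land p \equiv F$ force $E(A) \equiv F$.

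I expect the main obstacle to be the first paragraph rather than the symbolic manipulation: one must justify that the $\Lin$/$\Lout$ part of the grounded labeling really is built bottom-up in finitely many steps, so that the rank $r$ is well defined and strictly decreasing along the indicated attackers. This is precisely the least-fixpoint (minimality) characterization of $L_g$, and it is the only place where groundedness, as opposed to mere completeness, is used. The three-valued reasoning in the two inductive steps is routine, depending only on the congruence of $\equiv$ under $\neg$ and $\land$ together with the constant laws of Lemma~\ref{lmm:three-valued-equivalence-primitive}; should one wish to avoid rewriting the finite conjunction symbolically, each step can instead be closed by evaluating both sides at $v_U$ and invoking Corollary~\ref{coro:undecvaluation}.
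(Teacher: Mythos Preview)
The paper does not actually include a proof of this theorem; it is stated without proof, as one of several results recalled from the authors' earlier work~\cite{Moriguchi18}. So there is nothing in the present paper to compare your argument against.

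That said, your argument is correct and is the natural one. The inductive computation in each case is fine: once every attacker of an $\Lin$-argument has $E(A')\equiv F$, the congruence laws of Lemma~\ref{lmm:three-valued-equivalence-primitive} give $E(A)\equiv T$, and a single $\Lin$-attacker forces the conjunction for an $\Lout$-argument down to $F$. The only point that deserves a slightly cleaner treatment is the well-definedness of your rank $r$: as written it is a recursive definition whose termination is exactly what you are trying to justify. It is simpler to take $r(A)$ to be the first stage $k$ at which $A$ receives a label in the usual iterative computation of $L_g$ (i.e., iterating the characteristic function from the all-$\Lundec$ labeling on a finite framework). From that definition the two inequalities you need---all attackers of an $\Lin$-argument have strictly smaller stage, and some $\Lin$-attacker of an $\Lout$-argument has strictly smaller stage---are immediate, and your inductive steps go through unchanged.
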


Based on this instantiation, we arrive at the notion of general allocators.
\begin{defn} \label{def:general}
A complete allocator $E$ is called a general allocator iff for any constant allocator $E'$ there exists a valuation $v$ such that $E' = E_v$\footnote{In \cite{Moriguchi18}, we use complete labelings rather than constant allocators, but, according to Theorems~\ref{thm:label} and \ref{thm:labelinverse}, these are the same.}.
\end{defn}

We have already encountered an example of a general allocator as $E$, described in Example~\ref{example:variable}.

In \cite{Moriguchi18}, we proved the existence of general allocators for any framework.
\begin{thrm}
There is a general allocator for each finite framework.
\end{thrm}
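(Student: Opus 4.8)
The plan is to turn the existence claim into a finite combinatorial assembly problem. Since the framework is finite, there are at most $3^{|\AR|}$ labelings $L : \AR \rightarrow \{ \Lin, \Lundec, \Lout \}$, hence only finitely many complete ones; by Theorems~\ref{thm:label} and~\ref{thm:labelinverse} the complete labelings correspond exactly to the constant allocators, so we obtain a finite list $E'_1, \dots, E'_n$. By Definition~\ref{def:general} it suffices to exhibit a single complete allocator $E$ together with valuations $v_1, \dots, v_n$ with $E_{v_i} = E'_i$ for every $i$, because every constant allocator is one of the $E'_i$. Thus the task splits into two requirements: (a) $E$ is complete in the sense of Definition~\ref{def:allocator}, and (b) every $E'_i$ is recovered from $E$ by some valuation.

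First I would construct $E$ by fusing the $E'_i$ one at a time, maintaining a complete allocator as the running invariant. For the fusion of two complete allocators $E_1$ and $E_2$ I would pick a fresh variable $a$ and set $(E_1 \oplus_a E_2)(A) = (\neg a \lor E_1(A)) \land (a \lor E_2(A))$. Evaluating at $a = T$ gives $(F \lor E_1(A)) \land T \equiv E_1(A)$, and at $a = F$ gives $T \land E_2(A) \equiv E_2(A)$; extending the valuations that already recover $E'_1, \dots, E'_k$ inside $E_1$ by $a \mapsto T$, and taking $a \mapsto F$ for $E_2 = E'_{k+1}$, discharges requirement (b) at each inductive step. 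The base case $n = 1$ is immediate, since the unique constant allocator is trivially general.

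The main obstacle is requirement (a): the fused allocator must again be complete. Using the distributive and De Morgan laws of Lemma~\ref{lmm:three-valued-equivalence-primitive}, I would rewrite $\bigwedge_{(A',A) \in \ATT} \neg (E_1 \oplus_a E_2)(A')$ by distributing $\lor$ over the conjunction and invoking the completeness of $E_1$ and $E_2$, i.e.\ $\bigwedge_{(A',A) \in \ATT} \neg E_j(A') \equiv E_j(A)$; the result agrees with $(E_1 \oplus_a E_2)(A)$ at $a = T$ and at $a = F$. The delicate case is the undecided valuation $v_U$: by Corollary~\ref{coro:undecvaluation} whether an expression is a constant is decided by $\evalV{p}{v_U}$, and the naive fusion can manufacture a spurious $a \neg a$ (value $U$ at $a = U$) at an argument $A$ that is actually forced to $F$ because some attacker is labeled $\Lin$ in both $E_1$ and $E_2$. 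I would repair the fusion precisely where it bites: an argument pinned to a constant — most notably the grounded part, where Theorem~\ref{thm:grounded} forces $E(A) \equiv T$ or $E(A) \equiv F$ in every complete allocator — must receive that constant, obtained by re-deriving $E(A)$ from its completeness equation rather than from the fusion formula. I would also record the auxiliary fact that the Kleene connectives are monotone in the information order, which guarantees this correction is coherent and keeps $E_{v_U}$ equal to the grounded labeling. Iterating the corrected fusion over the finite list $E'_1, \dots, E'_n$ then terminates and produces a complete allocator realizing all of them, i.e.\ a general allocator.
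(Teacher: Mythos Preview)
Your overall strategy---enumerate the finitely many constant allocators and fuse them pairwise with a fresh selector variable, treating the grounded part separately---is exactly the paper's approach via Theorem~\ref{thm:generalize}. However, your specific fusion formula $(\neg a \lor E_1(A)) \land (a \lor E_2(A))$ does not yield a complete allocator, and your proposed repair does not cover the defect. Expanding by distributivity, your formula is $a E_1(A) \lor \neg a E_2(A) \lor a\neg a \lor E_1(A) E_2(A)$; the extra disjunct $E_1(A) E_2(A)$, absent from the paper's $a E_1(A) \lor \neg a E_2(A) \lor a\neg a$, overshoots to $T$ whenever $E_1(A)=E_2(A)=T$.

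Concretely, take framework~(b) of Figure~\ref{fig:AFexample1} and fuse the two non-grounded constant allocators corresponding to $L_2$ and $L_3$ of Example~\ref{example:labeling}: both assign $T$ to argument~$4$, so your fused allocator has $E(4)=T$, while $E(3)=(\neg a\lor F)\land(a\lor F)=a\neg a$. Completeness demands $E(4)\equiv\neg E(3)=a\lor\neg a$, which evaluates to $U$ at $a=U$, not $T$; hence $E$ is not complete. Argument~$4$ is \emph{not} pinned by the grounded labeling ($L_g(4)=\Lundec$), so your repair---overriding the fused value by the forced constant on the grounded part---does not touch it. Your diagnosis anticipated only the dual failure (a spurious $a\neg a$ where $F$ is forced); the failure here is a spurious $T$ where $a\lor\neg a$ is required, and monotonicity in the information order does not rescue it. The paper's fusion avoids this precisely by omitting the $E_1(A)E_2(A)$ term, so that when both inputs are $T$ the result is $a\lor\neg a$ rather than $T$, and it is this choice that makes the completeness verification go through.
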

The proof is derived from the following theorem.

\begin{thrm} \label{thm:generalize}
Let $E_1$ and $E_2$ be complete allocators, $a$ be a fresh variable, and $L_g$ be the grounded labeling.
When the allocator $E$ is defined as $E(A) = a E_1(A) \lor \neg a E_2(A) \lor a \neg a$ if $L_g(A) = \Lundec$ and $E(A) = E_1(A)$ otherwise,
$E$ is complete.
\end{thrm}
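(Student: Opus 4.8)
The plan is to verify the completeness identity $E(A) \equiv \bigwedge_{(A',A) \in \ATT} \neg E(A')$ for every argument $A$ directly from the definition of $\equiv$, namely that two expressions are equivalent exactly when they agree under every valuation. So I would fix an arbitrary valuation $v$ and show that both sides evaluate identically under $v$, organizing the argument by the value $\alpha = v(a)$ that $v$ assigns to the fresh variable $a$. Since $a$ enters $E$ only through the undec arguments and $\alpha$ ranges over $\{T, F, U\}$, these three cases exhaust all valuations.

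The cases $\alpha = T$ and $\alpha = F$ are routine. When $v(a) = T$ we have $\evalV{a}{v} = T$ and $\evalV{\neg a}{v} = F$, so for an undec argument $B$ the disjuncts $\neg a E_2(B)$ and $a \neg a$ both evaluate to $F$, giving $\evalV{E(B)}{v} = \evalV{E_1(B)}{v}$; for a non-undec argument this equality holds by definition. Hence under such $v$ the completeness identity for $E$ at $A$ reduces to $\evalV{E_1(A)}{v} = \evalV{\bigwedge_{(A',A)} \neg E_1(A')}{v}$, which holds because $E_1$ is complete. The case $\alpha = F$ is symmetric, collapsing $E$ onto $E_2$.

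The case $\alpha = U$ is the crux, and is where the structure of the grounded labeling $L_g$ together with Theorem~\ref{thm:grounded} is needed. I would split on $L_g(A)$. If $L_g(A) = \Lin$, then $E(A) = E_1(A) \equiv T$ and, by the grounded conditions, every attacker $A'$ is labeled $\Lout$, so each $\evalV{E(A')}{v} = F$ and the conjunction on the right evaluates to $T$; the case $L_g(A) = \Lout$ is dual, using that some attacker is labeled $\Lin$ to force an $F$ into the conjunction. The remaining subcase $L_g(A) = \Lundec$ is the main obstacle. On the left, since $\evalV{a \neg a}{v} = U$ and no disjunct can reach $T$ (the factor $\evalV{a}{v} = \evalV{\neg a}{v} = U$ caps each of the first two disjuncts at $U$), Kleene's tables give $\evalV{E(A)}{v} = U$. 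On the right I would invoke the defining properties of the grounded labeling: an undec argument has no $\Lin$-labeled attacker, so no conjunct equals $F$, while it has at least one $\Lundec$-labeled attacker $A'$, for which $\evalV{E(A')}{v} = U$, so the conjunction also evaluates to $U$. Both sides thus equal $U$.

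Collecting the three cases shows the two sides agree under every valuation, so $E(A) \equiv \bigwedge_{(A',A) \in \ATT} \neg E(A')$ for all $A$ and $E$ is complete. The only nontrivial ingredients are the semantic reading of $\equiv$, the Kleene evaluation tables, and the grounded-labeling facts supplied by Theorem~\ref{thm:grounded}; the genuine work is confined to the $\alpha = U$, $L_g(A) = \Lundec$ subcase, where the $a \neg a$ disjunct and the absence of $\Lin$-attackers conspire to make both sides evaluate to $U$.
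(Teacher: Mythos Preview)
The paper does not actually supply a proof of this theorem here (it is imported from the authors' earlier work), but the sentence immediately following the statement---that the composed allocator ``behaves like $E_1$ when $a$ is evaluated to $T$, $E_2$ when $a$ is evaluated to $F$, and grounded labeling \ldots\ when $a$ is evaluated to $U$''---is precisely your three-way case split on $v(a)$, so your strategy is the intended one and the argument is essentially correct.

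There is one small gap to patch. In the case $\alpha = F$ you write that the situation is ``symmetric, collapsing $E$ onto $E_2$'', but for a non-undec argument $B$ the definition gives $E(B) = E_1(B)$, not $E_2(B)$, so the symmetry with the $\alpha = T$ case is not automatic. The collapse to $E_2$ only goes through because Theorem~\ref{thm:grounded} forces $E_1(B) \equiv E_2(B)$ whenever $L_g(B) \neq \Lundec$ (both are $\equiv T$ if $L_g(B) = \Lin$, both $\equiv F$ if $L_g(B) = \Lout$), so that $\evalV{E(B)}{v} = \evalV{E_1(B)}{v} = \evalV{E_2(B)}{v}$ for such $B$ as well. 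You already invoke Theorem~\ref{thm:grounded} in the $\alpha = U$ analysis, so the ingredient is at hand; just make the dependence explicit in the $\alpha = F$ case and the proof is complete.
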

%
%

A composed allocator in Theorem~\ref{thm:generalize} behaves like $E_1$ when $a$ is evaluated to $T$, $E_2$ when $a$ is evaluated to $F$, and ground labeling (corresponding constant allocator) when $a$ is evaluated to $U$.
So, the process for building a general allocator composes all constant allocators, except for the one corresponding to the ground labeling.

The framework in Example~\ref{example:labeling} has three constant allocators corresponding to its labelings.
$E_1(A) = U$, $E_2(1) = E_2(4) = T$ and $E_2(2) = E_2(3) = F$, $E_3(2) = E_3(4) = T$ and $E_3(1) = E_3(3) = F$.
$E_1$ corresponds to the ground labeling.
Here, we compose $E_2$ and $E_3$ as $E$.
All arguments are labeled $\Lundec$ by the ground labeling, $E$ is defined as follows.
\begin{itemize}
 \item $E(1) = a T \lor \neg a F \lor a \neg a \equiv a \lor a \neg a \equiv a$
 \item $E(2) = a F \lor \neg a T \lor a \neg a \equiv \neg a \lor a \neg a \equiv \neg a$
 \item $E(3) = a F \lor \neg a F \lor a \neg a \equiv a \neg a$
 \item $E(4) = a T \lor \neg a T \lor a \neg a \equiv a \lor \neg a$
\end{itemize}
From $E$ and $v_U$, $E_1$ is instantiated.
$E_2$ and $E_3$ are instantiated from the valuations $v_2$ and $v_3$, where $v_2(a) = T$ and $v_3(a) = F$.
Therefore, $E$ is a general allocator.

We now present another example, as shown in Figure~\ref{fig:allocation_example}.
There are five constant allocators (i.e., complete labelings).
\begin{itemize}
 \item $E_1(A) = U$ for $A \in \{ 1, 2, 3, 4, 5 \}$.
 \item $E_2(A) = U$ for $A \in \{ 1, 2, 3 \}$, $E_2(4) = F$ and $E_2(5) = T$.
 \item $E_3(1) = T$, $E_3(2) = E_3(3) = F$ and $E_3(A) = U$ for $A \in \{ 4, 5 \}$.
 \item $E_4(1) = T$, $E_4(2) = E_4(3) = F$, $E_4(4) = T$ and $E_4(5) = F$.
 \item $E_5(1) = T$, $E_5(2) = E_5(3) = F$, $E_5(4) = F$ and $E_5(5) = T$.
\end{itemize}

\begin{figure}[tbp]
 \begin{center}
 \includegraphics[height=2cm]{./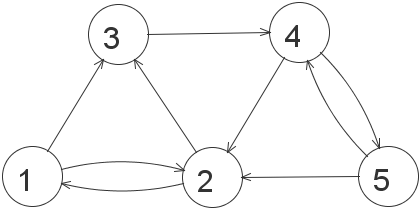}
 \end{center}
 \caption{The argumentation framework with five arguments and five complete labelings.}
 \label{fig:allocation_example}
\end{figure}

By composing $E_2$ and $E_3$, we obtain $E_{23}$ such that $E_{23}(1) = a_1 U \lor \neg a_1 T \lor a_1 \neg a_1 \equiv U a_1 \lor \neg a_1$\footnote{Here we show the allocated expression for $1$.}.
Next, composition with $E_4$ gives $E_{234}$ such that $E_{234}(1) = (U a_1 \lor \neg a_1) a_2 \lor \neg a_2 T \lor a_2 \neg a_2 \equiv (U a_1 \lor \neg a_1) a_2 \lor \neg a_2$.
Finally, we obtain a general allocator $E$ by composing $E_5$, which is $E(1) = ((U a_1 \lor \neg a_1) a_2 \lor \neg a_2) a_3 \lor \neg a_3$.

There are two problems with this method; the size of the allocator and the computational complexity.

Unfortunately, the general allocator produced by Theorem~\ref{thm:generalize} is frequently larger than desired.
As shown above, we need $n-2$ variables to build a general allocator, where $n$ is the number of complete labelings.
The AF shown in Figure~\ref{fig:AFexample3} has nine complete labelings, and thus we need seven variables.
However, as the left part (\{ 1, 2 \}) and right part (\{ 3, 4 \}) only affect the central argument $5$, without affecting each other, the general allocator $E$ can be constructed using only two variables, as $E(1) = \neg a$, $E(2) = a$, $E(3) = \neg b$, $E(4) = b$ and $E(5) = \neg a \lor \neg b$.

\begin{figure}[tbp]
 \begin{center}
 \includegraphics[height=2cm]{./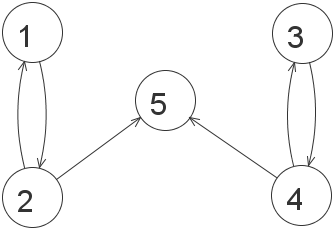}
 \end{center}
 \caption{A framework with two cyclic components.}
 \label{fig:AFexample3}
\end{figure}

Another problem with this process is that its complexity depends on the enumeration of complete labelings, which is not processed in polynomial time (unless $P \neq \mathit{NP}$)~\cite{Kroll17}.
To implement the allocation method, we require an algorithm for constructing general allocators without enumerating complete labelings.

\section{Equation Solving} \label{sec:equation}
In this section, we propose a new method for constructing general allocators for the framework.
Based on the definition of completeness, we can consider constant allocators and general allocators as particular solutions and general solutions of simultaneous equations, $E(A) \equiv \bigwedge_{(A', A) \in \ATT} \neg E(A')$, respectively.
We solve the equations by transformations and substitutions and obtain general allocators, as for numerical equations.

\subsection{Equations}
Here, we focus on the specific form of the equations, as the following definition\footnote{As we use $=$ for both an identity and an equation in numerical systems, we use $\equiv$ for both an equivalence relation and an equation.}.
\begin{defn}[Variable Equation]
An equation $X \equiv G$ where $X$ is a variable\footnote{In this section, we use $X$ to denote a variable on the left side of an equation.} and $G$ is an expression, is called a variable equation.
A solution of the equation is a valuation $v$ for $\Var \supseteq \{ X \} \cup \Var(G)$ where $v(X) = \evalV{G}{v}$.
We use the functions $\mathrm{LV}$ and $\mathrm{RE}$ over variable equations, $\LV{X \equiv G} = X$ and $\RE{X \equiv G} = G$.
\end{defn}

The following corresponds to the notion of simultaneous equations.

\begin{defn}[Variable Equation Set]
A set of variable equations $\mathbf{E}$ where $\forall E_1, E_2 \in \mathbf{E}. E_1 \neq E_2 \Rightarrow \LV{E_1} \neq \LV{E_2}$ is called a variable equation set.
A solution of the variable equation set is a valuation $v$ for $\Var$ where $v$ is a solution of any equation in $\mathbf{E}$.
We use the functions $\mathrm{LV}$ and $\mathrm{RE}$ as for variable equations, $\LV{\mathbf{E}} = \{ \LV{E} | E \in \mathbf{E} \}$ and $\RE{\mathbf{E}} = \{ \RE{E} | E \in \mathbf{E} \}$.
We also use the notation $\mathbf{E}(X)$ as $E$ where $E \in \mathbf{E}$ and $\LV{E} = X$.
\end{defn}

A variable equation set does not allow two variable equations about the same variable.
This limitation is sufficiently general to define the conditions for complete allocators. 

\begin{lmm} \label{lmm:solution_constant}
For the framework $\langle \AR, \ATT \rangle$, $\{ A \equiv \bigwedge_{(A', A) \in \ATT} \neg A' | A \in \AR \}$ can be seen as a variable equation set where $\Var = \AR$.
Then, a solution of the variable equation set is a constant allocator for the framework, and vice versa.
\end{lmm}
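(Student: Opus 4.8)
The plan is to exhibit a correspondence between valuations $v : \AR \to \{ T, F, U \}$ and constant allocators, and then to show that under this correspondence the solution condition of the variable equation set coincides exactly with the completeness condition of Definition~\ref{def:allocator}. First I would check that the given set is indeed a variable equation set. Since $\Var = \AR$, each argument $A$ plays the role of a variable, and the set contains exactly one equation $A \equiv \bigwedge_{(A', A) \in \ATT} \neg A'$ per argument; hence distinct equations have distinct left-hand variables, as required, and each right-hand side is an expression over $\AR$ (with the empty conjunction read as $T$, matching the note after Definition~\ref{def:allocator}).

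Next I would set up the correspondence. To a valuation $v$ I associate the allocator $E$ with $E(A) = v(A)$, which maps every argument to a constant; conversely, a constant allocator $E$ yields the valuation $v$ with $v(A) = E(A)$. The key observation is that once each $E(A')$ is a constant, the expression $\bigwedge_{(A', A) \in \ATT} \neg E(A')$ contains no variables, so its value $\evalV{\bigwedge_{(A', A) \in \ATT} \neg E(A')}{w}$ is the same for every valuation $w$; moreover, because $E(A') = v(A')$ for all $A'$, this common value equals $\evalV{\bigwedge_{(A', A) \in \ATT} \neg A'}{v}$, i.e.\ $\evalV{\RE{\mathbf{E}(A)}}{v}$. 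This last equality is where I would invoke the evaluation rules for $\neg$ and $\land$ (or, equivalently, Lemma~\ref{lmm:substitution}) to push the constants through.

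With this in hand, the completeness requirement $E(A) \equiv \bigwedge_{(A', A) \in \ATT} \neg E(A')$ --- an equality of values under all valuations --- collapses, since the left side $E(A)$ is the constant $v(A)$ and the right side is valuation-independent, to the single equality $v(A) = \evalV{\bigwedge_{(A', A) \in \ATT} \neg A'}{v}$, which is precisely the defining condition for $v$ to solve the equation of $A$. Ranging over all $A \in \AR$ then shows $E$ is complete iff $v$ solves the whole variable equation set, establishing both directions simultaneously. The only real subtlety, and the step I would treat most carefully, is the distinction between the two notions of equality in play: the $\equiv$ of Definition~\ref{def:allocator} quantifies over all valuations, whereas a solution is defined by a single valuation $v$. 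The reconciliation rests entirely on the fact that substituting a constant allocator turns each right-hand side into a closed expression whose value is valuation-independent and coincides with its evaluation under $v$.
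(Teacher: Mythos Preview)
Your proposal is correct and follows essentially the same approach as the paper, which simply states that the lemma follows straightforwardly from the definitions of constant allocators and solutions of variable equation sets. You have spelled out in detail what the paper leaves implicit, including the careful treatment of the distinction between the universally quantified $\equiv$ in Definition~\ref{def:allocator} and the single-valuation condition for solutions, which is exactly the content one must check.
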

This conclusion follows straightforwardly from the definitions of constant allocators and solutions of variable equation sets.

\subsection{Transformation}
Variables in equations depend on themselves through other variables and equations.
We call these types of dependencies in the variables \emph{self-dependencies}.
In numerical equations, we add/multiply some subexpressions to both sides of the equations to cancel out variables on one side and substitute the equivalent expressions to variables.
Because we do not have cancelable operators in three-valued logic, it is difficult to solve such equations in this way.
Here, we propose a similar equation for the given variable equation.

In this section, we make frequent use of variables in equation sets.
We use the function symbol $\Var$ over expressions, variable equations and variable equation sets, to denote variables occurring in each of them.

\begin{defn}[Refined Equation] \label{defn:refined_equation}
Let $X$ be a variable, $G$ and $H$ be expressions satisfying $\Var(G) \subseteq \Var(H) \cup \{ X \}$ and $X \not\in \Var(H)$, and $V$ be a set of all valuations for $\Var \supseteq \Var(H) \cup \{ X \}$.
$X \equiv H$ is called a refined equation of $X \equiv G$ when the following two formulas are satisfied.
\begin{itemize}
 \item $\forall v \in V. v(X) = \evalV{H}{v} \Rightarrow v(X) = \evalV{G}{v}$
 \item $\forall v \in V. v(X) = \evalV{G}{v} \Rightarrow \exists v' \in V. v(X) = \evalV{H}{v} \land \forall x \in (\Var \setminus \Var(H)) \cup \Var(G). v(x) = v'(x)$
\end{itemize}
\end{defn}
When $X \not\in \Var(G)$, $X \equiv G$ is a refined equation of itself.

The conditions in the above definition state that, in the former case, any solution of the refined equation is also the solution of the original equation; and in the latter case, that any solution of the original equation is represented by at least one solution of the refined equation.
In other words, the solutions of the refined equation cover all solutions of the original equation without self-dependency of $X$.

Self-dependency of the variable on the left side allows \emph{some} assignments in the solutions of an equation, without changing the assignments of other variables.
For example, $X \equiv U X$ gives two solutions, $v_1(X) = T$ and $v_2(X) = U$.
Because a refined equation removes self-dependency from the original equation, we need an extra variable to emulate such behavior.
In this case, $X \equiv U x$ is a refined equation where $x$ is an extra variable.

To obtain a refined equation, we need to clarify the self-dependency in the equation.
So, we build a refined equation of an arbitrary equation through the intermediate form, $X \equiv P X \lor N \neg X \lor C X \neg X \lor M$\footnote{$P$, $N$, and $C$ are coefficients of $X$ (positive dependency), $\neg X$ (negative dependency), and $X \neg X$ (complex dependency), respectively. $C$ is required because, as shown in Lemma~\ref{lmm:three-valued-not-equivalence}, $X \neg X$ is not equivalent to $F$.}.
First, we introduce the function used to transform an expression into this form.

\begin{defn} \label{defn:transform_function}
Assume that $x$ and $y$ are variables where $x \neq y$.
$\mathcal{R}_x$ is a function from an expression to a quadruple of expressions.
$$\begin{array}{lrl@{\quad}lrl}
\mathcal{R}_x(c)                        & = & (F, F, F, c)                & \mathcal{R}_x(p_1 \lor p_2)             & = & (P_1 \lor P_2, N_1 \lor N_2, C_1 \lor C_2, M_1 \lor M_2) \\
\mathcal{R}_x(x)                        & = & (T, F, F, F)                & \mathcal{R}_x(p_1 p_2)                  & = & (P_1 P_2 \lor P_1 M_2 \lor P_2 M_1,   \\
\mathcal{R}_x(y)                        & = & (F, F, F, y)                &                                         &   & \ \  N_1 N_2 \lor N_1 M_2 \lor N_2 M_1, \\
\mathcal{R}_x(\neg p)                   & = & \overline{\mathcal{R}}_x(p) &                                         &   & \ \  (P_1 \lor N_1 \lor M_1) C_2 \lor (P_2 \lor N_2 \lor M_2) C_1  \\
                                        &   &                             &                                         &   & \qquad \lor C_1 C_2 \lor P_1 N_2 \lor N_1 P_2, \\
                                        &   &                             &                                         &   & \ \  M_1 M_2) \\
\overline{\mathcal{R}}_x(c)             & = & (F, F, F, \neg c)           & \overline{\mathcal{R}}_x(p_1 p_2) & = & (\overline{P}_1 \lor \overline{P}_2, \overline{N}_1 \lor \overline{N}_2, \overline{C}_1 \lor \overline{C}_2, \overline{M}_1 \lor \overline{M}_2) \\
\overline{\mathcal{R}}_x(x)             & = & (F, T, F, F)                & \overline{\mathcal{R}}_x(p_1 \lor p_2)  & = & (\overline{P}_1 \overline{P}_2 \lor \overline{P}_1 \overline{M}_2 \lor \overline{P}_2 \overline{M}_1, \\
\overline{\mathcal{R}}_x(y)             & = & (F, F, F, \neg y)           &                                         &   & \ \  \overline{N}_1 \overline{N}_2 \lor \overline{N}_1 \overline{M}_2 \lor \overline{N}_2 \overline{M}_1, \\
\overline{\mathcal{R}}_x(\neg p)        & = & \mathcal{R}_x(p)            &                                         &   & \ \  (\overline{P}_1 \lor \overline{N}_1 \lor \overline{M}_1) \overline{C}_2 \lor (\overline{P}_2 \lor \overline{N}_2 \lor \overline{M}_2) \overline{C}_1 \\
                                        &   &                             &                                         &   & \qquad \lor \overline{C}_1 \overline{C}_2 \lor \overline{P}_1 \overline{N}_2 \lor \overline{N}_1 \overline{P}_2, \\
                                        &   &                             &                                         &   & \ \  \overline{M}_1 \overline{M}_2) \\
\end{array}$$
where $c$ is a logical constant, $(P_i, N_i, C_i, M_i) = \mathcal{R}_x(p_i)$ and $(\overline{P}_i, \overline{N}_i, \overline{C}_i, \overline{M}_i) = \overline{\mathcal{R}}_x(p_i)$ for $i = 1, 2$.
\end{defn}
Intuitively, $\mathcal{R}$ processes an expression $G$ and $\overline{\mathcal{R}}$ processes an expression $\neg G$.
The following lemma demonstrates the intention of these functions.

\begin{lmm} \label{lmm:PNCMform}
Let $X$ be a variable and $G$ be an expression.
When $\mathcal{R}_X(G) = (P, N, C, M)$, the following conditions are valid.
\begin{itemize}
 \item There are no occurrences of $X$ in $P$, $N$, $C$ or $M$.
 \item $G \equiv P X \lor N \neg X \lor C X \neg X \lor M$.
 \item $\Var(P) \cup \Var(N) \cup \Var(C) \cup \Var(M) = \Var(G) \setminus \{ X \}$.
\end{itemize}
\end{lmm}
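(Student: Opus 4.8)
The plan is to prove the three properties by structural induction on $G$. Because $\mathcal{R}_X$ and $\overline{\mathcal{R}}_X$ are mutually recursive (the clauses for $\neg p$ swap the two functions), I would first strengthen the statement into a simultaneous claim, adding for $\overline{\mathcal{R}}_X(G) = (\overline{P}, \overline{N}, \overline{C}, \overline{M})$ the dual assertions that $X$ does not occur in any of $\overline{P}, \overline{N}, \overline{C}, \overline{M}$, that $\neg G \equiv \overline{P} X \lor \overline{N} \neg X \lor \overline{C} X \neg X \lor \overline{M}$, and that $\Var(\overline{P}) \cup \Var(\overline{N}) \cup \Var(\overline{C}) \cup \Var(\overline{M}) = \Var(G) \setminus \{ X \}$ (using $\Var(\neg G) = \Var(G)$). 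A useful observation that organizes the argument is that $\overline{\mathcal{R}}_X$ computes the normal form of $\neg G$: its compound clauses mirror those of $\mathcal{R}_X$ with $\land$ and $\lor$ interchanged (De Morgan, Lemma~\ref{lmm:three-valued-equivalence-primitive}), while its base clauses encode the negation of a constant and the polarity swap of $X$ directly. Consequently only two genuine calculations arise, one for a disjunction and one for a conjunction, each reused for the dual $\overline{\mathcal{R}}_X$ case.

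The base cases ($G$ equal to a constant $c$, to $X$, or to another variable $y$) are settled by substituting the defining quadruples into $P X \lor N \neg X \lor C X \neg X \lor M$ and simplifying with $F \land p \equiv F$, $F \lor p \equiv p$, and $T \land p \equiv p$ from Lemma~\ref{lmm:three-valued-equivalence-primitive}, which collapse the expression to $c$, $X$, and $y$ respectively; the no-occurrence and variable-set conditions are then read off by inspection. The cases $G = \neg p$ are discharged immediately from the opposite half of the induction hypothesis, since $\mathcal{R}_X(\neg p) = \overline{\mathcal{R}}_X(p)$, $\overline{\mathcal{R}}_X(\neg p) = \mathcal{R}_X(p)$, and $\neg \neg p \equiv p$. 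The disjunction case $G = p_1 \lor p_2$ is routine distributivity: $(P_1 \lor P_2) X \equiv P_1 X \lor P_2 X$ and likewise for the remaining coefficients, so reassociating the disjuncts recovers $p_1 \lor p_2$ from the two hypotheses.

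The main obstacle is the conjunction case $G = p_1 p_2$. Here I would substitute $p_i \equiv P_i X \lor N_i \neg X \lor C_i X \neg X \lor M_i$ and expand the product of the two four-term disjunctions into its sixteen pairwise conjunctions. Each is simplified using $X X \equiv X$, $\neg X \neg X \equiv \neg X$, and the products involving both polarities, namely $X \cdot \neg X \equiv X \neg X$, $X \cdot X \neg X \equiv X \neg X$, $\neg X \cdot X \neg X \equiv X \neg X$, and $X \neg X \cdot X \neg X \equiv X \neg X$. The delicate point is that $X \cdot \neg X$ must be retained as the complex term $X \neg X$ and not discarded as $F$; this is precisely why the $C$ coefficient is needed (Lemma~\ref{lmm:three-valued-not-equivalence}), and mishandling it would break the equivalence. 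Grouping the sixteen simplified products by the factor they carry ($X$, $\neg X$, $X \neg X$, or none) then reproduces exactly the four coefficients of Definition~\ref{defn:transform_function}: $P_1 P_2, P_1 M_2, M_1 P_2$ give $P$; $N_1 N_2, N_1 M_2, N_2 M_1$ give $N$; the nine terms carrying some $C_i$ together with $P_1 N_2$ and $N_1 P_2$ give $C$; and $M_1 M_2$ gives $M$. The dual case for $\overline{\mathcal{R}}_X$ follows from the same expansion applied to the barred coefficients.

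Finally, the two syntactic properties propagate uniformly. Since in every clause the new coefficients are assembled from subexpression coefficients using only $\land$ and $\lor$, the induction hypothesis guarantees that $X$ is never reintroduced, and the union of the coefficient variable sets equals $(\Var(p_1) \cup \Var(p_2)) \setminus \{ X \} = \Var(G) \setminus \{ X \}$; the $\neg p$ case is identical because $\Var(\neg p) = \Var(p)$. The only care needed is in the conjunction case, where one checks that every subexpression coefficient appears as a conjunct of at least one product term, so that no variable is dropped, and that no new constant carrying a variable is introduced, so that none is invented.
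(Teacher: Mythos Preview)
Your proposal is correct and follows exactly the approach the paper indicates: structural induction on $G$, strengthened to a simultaneous claim for $\overline{\mathcal{R}}_X$ (the paper's proof is the single line ``This is proven by induction on $G$'' with a footnote that one must state the analogous proposition for $\overline{\mathcal{R}}$). Your write-up simply spells out the details the paper omits, including the sixteen-term expansion in the conjunction case and the verification that each subexpression coefficient survives in at least one product term.
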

This is proven by induction on $G$\footnote{Precisely speaking, we should declare similar propositions about $\overline{\mathcal{R}}$.}.

Here, we show some small examples of refined equations.
As shown above, $X \equiv U x$ is a refined equation of $X \equiv U X$.
In general, $X \equiv P x$ (where $P$ is an expression without any occurrences of $X$ and $x$) is a refined equation of $X \equiv P X$, because when $P$ is evaluated to $T$, $U$, and $F$, $X$ can be one of $\{ T, U, F \}$, $\{ U, F \}$, and $\{ F \}$, respectively, and $P x$ is also in the same range in each case.

$X \equiv \neg X$ only gives one solution, $v(X) = U$, so $X \equiv U$ is a refined equation of that equation.
The case of $X \equiv N \neg X$, where $N$ is an expression without any occurrences of $X$, yields two types of solution; one is $v(X) = U$ and $\evalV{N}{v'} \neq F$, and the other is $v'(X) = F$ and $\evalV{N}{v'} = F$.
In this case, $X \equiv U N$ is a refined equation of $X \equiv N \neg X$.

Fortunately, the case $X \equiv P X \lor N \neg X$ is easy; $X \equiv P x \lor U N$ is a refined equation of the equation.
Also, in a similar manner to the above, we can prove that $X \equiv U C x$ is a refined equation of $X \equiv C X \neg X$.

From the above, we obtain a refined equation of the equation of this form.

\begin{thrm} \label{thrm:PNCMequation}
Let $X$ be a variable and $P$, $N$, $C$ and $M$ be expressions that do not contain $X$.
When $x$ is a fresh variable (i.e., no occurrences in any of $P$, $N$, $C$, or $M$),  $X \equiv P x \lor U (N \lor C x) \lor M$ is a refined equation of $X \equiv P X \lor N \neg X \lor C X \neg X \lor M$.
\end{thrm}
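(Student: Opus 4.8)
The plan is to strip the statement down to a purely semantic claim about two functions of one three-valued argument, and then settle that claim by a structured case analysis. First I would fix an arbitrary valuation $v$ and abbreviate the four constants $p = \evalV{P}{v}$, $n = \evalV{N}{v}$, $c = \evalV{C}{v}$, and $m = \evalV{M}{v}$. Since none of $P$, $N$, $C$, $M$ contains $X$ or the fresh variable $x$, these values are unaffected by what $v$ assigns to $X$ and $x$. I would then introduce $g, h : \{ T, U, F \} \to \{ T, U, F \}$, where $g(\xi)$ is the value of $P X \lor N \neg X \lor C X \neg X \lor M$ when $X$ is given the value $\xi$, and $h(\chi)$ is the value of $P x \lor U (N \lor C x) \lor M$ when $x$ is given the value $\chi$. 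Evaluating directly (using that $\xi \land \neg \xi$ is $U$ when $\xi = U$ and $F$ otherwise, together with distributivity from Lemma~\ref{lmm:three-valued-equivalence-primitive}), one records $g(T) = p \lor m$, $g(F) = n \lor m$, and $g(U) = U (p \lor n \lor c) \lor m$, and likewise $h(T) = p \lor U(n \lor c) \lor m$, $h(F) = U n \lor m$, and $h(U) = U p \lor U(n \lor U c) \lor m$.

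Next I would rewrite the two clauses of Definition~\ref{defn:refined_equation} in these terms. A valuation solves $X \equiv P x \lor U (N \lor C x) \lor M$ precisely when $v(X) = h(v(x))$, and it solves $X \equiv P X \lor N \neg X \lor C X \neg X \lor M$ precisely when $v(X)$ is a fixed point of $g$; moreover, in the completeness clause the witness $v'$ may differ from $v$ only in its value on $x$ (the variable set on which they must agree omits exactly $x$), so supplying $v'$ amounts to supplying a preimage of $v(X)$ under $h$. Consequently the soundness clause asserts that every value in the range of $h$ is a fixed point of $g$, and the completeness clause asserts that every fixed point of $g$ lies in the range of $h$. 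The theorem therefore reduces to the single identity $\mathrm{range}(h) = \{ \xi \mid g(\xi) = \xi \}$, to be verified for every choice of $p, n, c, m$.

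I would carry out that verification by casing first on $m$. When $m = T$ both functions are constantly $T$, so each side is $\{ T \}$; when $m = U$ all outputs fall in $\{ T, U \}$, the value $U$ is always both a fixed point and attained by $h$, and $T$ appears on each side exactly when $p = T$. The substantive case is $m = F$, where $\lor m$ is inert; there I would sub-case on $p$, and within $p = F$ split further on whether $n$ and $c$ equal $F$, confirming in each instance that the fixed-point set and the range coincide. I expect this last case to be the main obstacle, because the $U$-guarded subterms $U(\cdots)$ collapse to $U$ or to $F$ according to whether the enclosed disjunction is $F$, and $h(U)$ in particular superposes a $U$-capped copy of $p$ with a $U$-capped copy of $n \lor C x$, so the handling of the $U$-inputs must be tracked carefully. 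As a sanity check I would note that $P x \lor U(N \lor C x) \lor M$ is exactly the disjunction, sharing the single auxiliary variable $x$, of the atomic refined forms already verified in the text ($P x$ for $X \equiv P X$, $U N$ for $X \equiv N \neg X$, and $U C x$ for $X \equiv C X \neg X$), and the case analysis should confirm that this superposition still yields precisely the fixed points of the combined equation.
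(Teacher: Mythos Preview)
Your proposal is correct and follows essentially the same route as the paper: both reduce the two clauses of Definition~\ref{defn:refined_equation} to a check, for every possible tuple of values of $P,N,C,M$, that the set of fixed points of the original right-hand side coincides with the set of values attainable by the refined right-hand side as $x$ ranges over $\{T,U,F\}$. The paper carries this out as a flat $81$-row table (Table~\ref{tbl:PNCM_tbl}), whereas you organize the same exhaustive verification hierarchically by casing first on $m$, then on $p$, then on $n$ and $c$; your explicit formulation $\mathrm{range}(h)=\{\xi\mid g(\xi)=\xi\}$ is a cleaner articulation of what the table is checking, but the underlying argument is the same finite case analysis.
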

\begin{proof}
For any valuation $v$, $P$, $N$, $C$ and $M$ will be evaluated to one of $T$, $F$ or $U$.
To verify that the conditions in Definition~\ref{defn:refined_equation} are satisfied, we should show that, for each evaluation of $P$, $N$, $C$ and $M$, there exist some assignments of $x$ for each possible assignment of $X$ that make the original equation and the refined equation valid (for the latter condition), and there exists a valid assignment of $X$ for each assignment of $x$ (for the former condition).

Table~\ref{tbl:PNCM_tbl} shows the evaluation results of $P$, $N$, $C$ and $M$, the possible assignments for $X$ that make the original equation valid, and the assignments of $x$ that make the refined equation valid.
Such a valuation only differs from the original valuation in terms of the assignment of $x$.
You can see that there exist some assignments of $x$ for each column (i.e., the latter condition in the definition of refined equations is satisfied), and each assignment of $x$ occurs exactly once in the result of each evaluation of $P$, $N$, $C$, and $M$ (i.e., the former condition is satisfied).
\end{proof}

\begin{table}[ptb]
\caption{Correspondence between valuations for two equations in Theorem~\ref{thrm:PNCMequation}.
The columns of $P$, $N$, $C$ and $M$ denote the evaluation results of each term with valuations, that of possible-$X$ denotes the valid assignments of the valuations, and that of $x$ denotes the assignments of $x$ that makes the refined equation valid.} 
\label{tbl:PNCM_tbl}
{ \tiny
\begin{tabular}[t]{llllll} \hline
$P$                  & $N$                  & $C$                  & $M$                  & possible-$X$ & $x$ \\ \hline
$T$                  & $T$                  & $T$                  & $T$                  & $T$          & $T$, $U$, $F$ \\ \hline
\multirow{2}{*}{$T$} & \multirow{2}{*}{$T$} & \multirow{2}{*}{$T$} & \multirow{2}{*}{$U$} & $T$          & $T$ \\
                     &                      &                      &                      & $U$          & $U$, $F$ \\ \hline
\multirow{2}{*}{$T$} & \multirow{2}{*}{$T$} & \multirow{2}{*}{$T$} & \multirow{2}{*}{$F$} & $T$          & $T$ \\
                     &                      &                      &                      & $U$          & $U$, $F$ \\ \hline
$T$                  & $T$                  & $U$                  & $T$                  & $T$          & $T$, $U$, $F$ \\ \hline
\multirow{2}{*}{$T$} & \multirow{2}{*}{$T$} & \multirow{2}{*}{$U$} & \multirow{2}{*}{$U$} & $T$          & $T$ \\
                     &                      &                      &                      & $U$          & $U$, $F$ \\ \hline
\multirow{2}{*}{$T$} & \multirow{2}{*}{$T$} & \multirow{2}{*}{$U$} & \multirow{2}{*}{$F$} & $T$          & $T$ \\
                     &                      &                      &                      & $U$          & $U$, $F$ \\ \hline
$T$                  & $T$                  & $F$                  & $T$                  & $T$          & $T$, $U$, $F$ \\ \hline
\multirow{2}{*}{$T$} & \multirow{2}{*}{$T$} & \multirow{2}{*}{$F$} & \multirow{2}{*}{$U$} & $T$          & $T$ \\
                     &                      &                      &                      & $U$          & $U$, $F$ \\ \hline
\multirow{2}{*}{$T$} & \multirow{2}{*}{$T$} & \multirow{2}{*}{$F$} & \multirow{2}{*}{$F$} & $T$          & $T$ \\
                     &                      &                      &                      & $U$          & $U$, $F$ \\ \hline
$T$                  & $U$                  & $T$                  & $T$                  & $T$          & $T$, $U$, $F$ \\ \hline
\multirow{2}{*}{$T$} & \multirow{2}{*}{$U$} & \multirow{2}{*}{$T$} & \multirow{2}{*}{$U$} & $T$          & $T$ \\
                     &                      &                      &                      & $U$          & $U$, $F$ \\ \hline
\multirow{2}{*}{$T$} & \multirow{2}{*}{$U$} & \multirow{2}{*}{$T$} & \multirow{2}{*}{$F$} & $T$          & $T$ \\
                     &                      &                      &                      & $U$          & $U$, $F$ \\ \hline
$T$                  & $U$                  & $U$                  & $T$                  & $T$          & $T$, $U$, $F$ \\ \hline
\multirow{2}{*}{$T$} & \multirow{2}{*}{$U$} & \multirow{2}{*}{$U$} & \multirow{2}{*}{$U$} & $T$          & $T$ \\
                     &                      &                      &                      & $U$          & $U$, $F$ \\ \hline
\multirow{2}{*}{$T$} & \multirow{2}{*}{$U$} & \multirow{2}{*}{$U$} & \multirow{2}{*}{$F$} & $T$          & $T$ \\
                     &                      &                      &                      & $U$          & $U$, $F$ \\ \hline
$T$                  & $U$                  & $F$                  & $T$                  & $T$          & $T$, $U$, $F$ \\ \hline
\multirow{2}{*}{$T$} & \multirow{2}{*}{$U$} & \multirow{2}{*}{$F$} & \multirow{2}{*}{$U$} & $T$          & $T$ \\
                     &                      &                      &                      & $U$          & $U$, $F$ \\ \hline
\multirow{2}{*}{$T$} & \multirow{2}{*}{$U$} & \multirow{2}{*}{$F$} & \multirow{2}{*}{$F$} & $T$          & $T$ \\
                     &                      &                      &                      & $U$          & $U$, $F$ \\ \hline
$T$                  & $F$                  & $T$                  & $T$                  & $T$          & $T$, $U$, $F$ \\ \hline
\multirow{2}{*}{$T$} & \multirow{2}{*}{$F$} & \multirow{2}{*}{$T$} & \multirow{2}{*}{$U$} & $T$          & $T$ \\
                     &                      &                      &                      & $U$          & $U$, $F$ \\ \hline
\multirow{3}{*}{$T$} & \multirow{3}{*}{$F$} & \multirow{3}{*}{$T$} & \multirow{3}{*}{$F$} & $T$          & $T$ \\
                     &                      &                      &                      & $U$          & $U$ \\
                     &                      &                      &                      & $F$          & $F$ \\ \hline
$T$                  & $F$                  & $U$                  & $T$                  & $T$          & $T$, $U$, $F$ \\ \hline
\multirow{2}{*}{$T$} & \multirow{2}{*}{$F$} & \multirow{2}{*}{$U$} & \multirow{2}{*}{$U$} & $T$          & $T$ \\
                     &                      &                      &                      & $U$          & $U$, $F$ \\ \hline
\multirow{3}{*}{$T$} & \multirow{3}{*}{$F$} & \multirow{3}{*}{$U$} & \multirow{3}{*}{$F$} & $T$          & $T$ \\
                     &                      &                      &                      & $U$          & $U$ \\
                     &                      &                      &                      & $F$          & $F$ \\ \hline
$T$                  & $F$                  & $F$                  & $T$                  & $T$          & $T$, $U$, $F$ \\ \hline
\multirow{2}{*}{$T$} & \multirow{2}{*}{$F$} & \multirow{2}{*}{$F$} & \multirow{2}{*}{$U$} & $T$          & $T$ \\
                     &                      &                      &                      & $U$          & $U$, $F$ \\ \hline
\multirow{3}{*}{$T$} & \multirow{3}{*}{$F$} & \multirow{3}{*}{$F$} & \multirow{3}{*}{$F$} & $T$          & $T$ \\
                     &                      &                      &                      & $U$          & $U$ \\
                     &                      &                      &                      & $F$          & $F$ \\ \hline
$U$                  & $T$                  & $T$                  & $T$                  & $T$          & $T$, $U$, $F$ \\ \hline
$U$                  & $T$                  & $T$                  & $U$                  & $U$          & $T$, $U$, $F$ \\ \hline
$U$                  & $T$                  & $T$                  & $F$                  & $U$          & $T$, $U$, $F$ \\ \hline
$U$                  & $T$                  & $U$                  & $T$                  & $T$          & $T$, $U$, $F$ \\ \hline
$U$                  & $T$                  & $U$                  & $U$                  & $U$          & $T$, $U$, $F$ \\ \hline
$U$                  & $T$                  & $U$                  & $F$                  & $U$          & $T$, $U$, $F$ \\ \hline
\end{tabular}
\quad
\begin{tabular}[t]{llllll} \hline
$P$                  & $N$                  & $C$                  & $M$                  & possible-$X$ & $x$ \\ \hline
$U$                  & $T$                  & $F$                  & $T$                  & $T$          & $T$, $U$, $F$ \\ \hline
$U$                  & $T$                  & $F$                  & $U$                  & $U$          & $T$, $U$, $F$ \\ \hline
$U$                  & $T$                  & $F$                  & $F$                  & $U$          & $T$, $U$, $F$ \\ \hline
$U$                  & $U$                  & $T$                  & $T$                  & $T$          & $T$, $U$, $F$ \\ \hline
$U$                  & $U$                  & $T$                  & $U$                  & $U$          & $T$, $U$, $F$ \\ \hline
$U$                  & $U$                  & $T$                  & $F$                  & $U$          & $T$, $U$, $F$ \\ \hline
$U$                  & $U$                  & $U$                  & $T$                  & $T$          & $T$, $U$, $F$ \\ \hline
$U$                  & $U$                  & $U$                  & $U$                  & $U$          & $T$, $U$, $F$ \\ \hline
$U$                  & $U$                  & $U$                  & $F$                  & $U$          & $T$, $U$, $F$ \\ \hline
$U$                  & $U$                  & $F$                  & $T$                  & $T$          & $T$, $U$, $F$ \\ \hline
$U$                  & $U$                  & $F$                  & $U$                  & $U$          & $T$, $U$, $F$ \\ \hline
$U$                  & $U$                  & $F$                  & $F$                  & $U$          & $T$, $U$, $F$ \\ \hline
$U$                  & $F$                  & $T$                  & $T$                  & $T$          & $T$, $U$, $F$ \\ \hline
$U$                  & $F$                  & $T$                  & $U$                  & $U$          & $T$, $U$, $F$ \\ \hline
\multirow{2}{*}{$U$} & \multirow{2}{*}{$F$} & \multirow{2}{*}{$T$} & \multirow{2}{*}{$F$} & $U$          & $T$, $U$ \\
                     &                      &                      &                      & $F$          & $F$ \\ \hline
$U$                  & $F$                  & $U$                  & $T$                  & $T$          & $T$, $U$, $F$ \\ \hline
$U$                  & $F$                  & $U$                  & $U$                  & $U$          & $T$, $U$, $F$ \\ \hline
\multirow{2}{*}{$U$} & \multirow{2}{*}{$F$} & \multirow{2}{*}{$U$} & \multirow{2}{*}{$F$} & $U$          & $T$, $U$ \\
                     &                      &                      &                      & $F$          & $F$ \\ \hline
$U$                  & $F$                  & $F$                  & $T$                  & $T$          & $T$, $U$, $F$ \\ \hline
$U$                  & $F$                  & $F$                  & $U$                  & $U$          & $T$, $U$, $F$ \\ \hline
\multirow{2}{*}{$U$} & \multirow{2}{*}{$F$} & \multirow{2}{*}{$F$} & \multirow{2}{*}{$F$} & $U$          & $T$, $U$ \\
                     &                      &                      &                      & $F$          & $F$ \\ \hline
$F$                  & $T$                  & $T$                  & $T$                  & $T$          & $T$, $U$, $F$ \\ \hline
$F$                  & $T$                  & $T$                  & $U$                  & $U$          & $T$, $U$, $F$ \\ \hline
$F$                  & $T$                  & $T$                  & $F$                  & $U$          & $T$, $U$, $F$ \\ \hline
$F$                  & $T$                  & $U$                  & $T$                  & $T$          & $T$, $U$, $F$ \\ \hline
$F$                  & $T$                  & $U$                  & $U$                  & $U$          & $T$, $U$, $F$ \\ \hline
$F$                  & $T$                  & $U$                  & $F$                  & $U$          & $T$, $U$, $F$ \\ \hline
$F$                  & $T$                  & $F$                  & $T$                  & $T$          & $T$, $U$, $F$ \\ \hline
$F$                  & $T$                  & $F$                  & $U$                  & $U$          & $T$, $U$, $F$ \\ \hline
$F$                  & $T$                  & $F$                  & $F$                  & $U$          & $T$, $U$, $F$ \\ \hline
$F$                  & $U$                  & $T$                  & $T$                  & $T$          & $T$, $U$, $F$ \\ \hline
$F$                  & $U$                  & $T$                  & $U$                  & $U$          & $T$, $U$, $F$ \\ \hline
$F$                  & $U$                  & $T$                  & $F$                  & $U$          & $T$, $U$, $F$ \\ \hline
$F$                  & $U$                  & $U$                  & $T$                  & $T$          & $T$, $U$, $F$ \\ \hline
$F$                  & $U$                  & $U$                  & $U$                  & $U$          & $T$, $U$, $F$ \\ \hline
$F$                  & $U$                  & $U$                  & $F$                  & $U$          & $T$, $U$, $F$ \\ \hline
$F$                  & $U$                  & $F$                  & $T$                  & $T$          & $T$, $U$, $F$ \\ \hline
$F$                  & $U$                  & $F$                  & $U$                  & $U$          & $T$, $U$, $F$ \\ \hline
$F$                  & $U$                  & $F$                  & $F$                  & $U$          & $T$, $U$, $F$ \\ \hline
$F$                  & $F$                  & $T$                  & $T$                  & $T$          & $T$, $U$, $F$ \\ \hline
$F$                  & $F$                  & $T$                  & $U$                  & $U$          & $T$, $U$, $F$ \\ \hline
\multirow{2}{*}{$F$} & \multirow{2}{*}{$F$} & \multirow{2}{*}{$T$} & \multirow{2}{*}{$F$} & $U$          & $T$, $U$ \\
                     &                      &                      &                      & $F$          & $F$ \\ \hline
$F$                  & $F$                  & $U$                  & $T$                  & $T$          & $T$, $U$, $F$ \\ \hline
$F$                  & $F$                  & $U$                  & $U$                  & $U$          & $T$, $U$, $F$ \\ \hline
\multirow{2}{*}{$F$} & \multirow{2}{*}{$F$} & \multirow{2}{*}{$U$} & \multirow{2}{*}{$F$} & $U$          & $T$, $U$ \\
                     &                      &                      &                      & $F$          & $F$ \\ \hline
$F$                  & $F$                  & $F$                  & $T$                  & $T$          & $T$, $U$, $F$ \\ \hline
$F$                  & $F$                  & $F$                  & $U$                  & $U$          & $T$, $U$, $F$ \\ \hline
$F$                  & $F$                  & $F$                  & $F$                  & $F$          & $T$, $U$, $F$ \\ \hline
\end{tabular}
}
\end{table}

From Lemma~\ref{lmm:PNCMform} and Theorem~\ref{thrm:PNCMequation}, we obtain a refined equation of a variable equation $E$.
We write the refined equation as $\REF{E}$, i.e., $\REF{E} = \LV{E} \equiv P x \lor U (N \lor C x) \lor M$, where $(P, N, C, M) = \mathcal{R}_{\LV{E}}(\RE{E})$.

\begin{coro} \label{coro:ref_variable}
Let $E$ be a variable equation and $x$ be a variable introduced in $\REF{E}$.
Then, $\Var(\RE{E}) \setminus \{ \LV{E} \} = \Var(\RE{\REF{E}}) \setminus \{ x \}$.
\end{coro}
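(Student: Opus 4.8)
The plan is to unfold both sides of the claimed equality using the explicit form of $\REF{E}$ and then invoke the variable-tracking clause of Lemma~\ref{lmm:PNCMform}. Writing $X = \LV{E}$ and $G = \RE{E}$, and setting $(P, N, C, M) = \mathcal{R}_X(G)$, recall that by the construction preceding the corollary we have $\RE{\REF{E}} = P x \lor U (N \lor C x) \lor M$, where $x$ is the fresh variable introduced in $\REF{E}$. The left-hand side is simply $\Var(G) \setminus \{ X \}$, so the task reduces to computing $\Var(P x \lor U (N \lor C x) \lor M) \setminus \{ x \}$.

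First I would compute the variable set of the right-hand side expression directly. Since the variable set of a compound expression is the union of the variable sets of its immediate subexpressions, and constants such as $U$ contribute no variables while $x$ occurs (in the subterms $P x$ and $C x$), we obtain
$$\Var(\RE{\REF{E}}) = \Var(P) \cup \Var(N) \cup \Var(C) \cup \Var(M) \cup \{ x \}.$$
Next I would subtract $x$. The key point, guaranteed by the freshness of $x$ from Theorem~\ref{thrm:PNCMequation}, is that $x$ does not occur in any of $P$, $N$, $C$, or $M$; hence removing $x$ from the union above leaves exactly $\Var(P) \cup \Var(N) \cup \Var(C) \cup \Var(M)$. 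The third clause of Lemma~\ref{lmm:PNCMform} states precisely that this union equals $\Var(G) \setminus \{ X \} = \Var(\RE{E}) \setminus \{ \LV{E} \}$, which is the left-hand side. Combining the two identities yields the claimed equation.

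There is essentially no obstacle here beyond bookkeeping: the statement is a direct consequence of the variable-preservation property already established in Lemma~\ref{lmm:PNCMform}, together with the freshness of $x$. The only point demanding care is confirming simultaneously that $x$ genuinely belongs to $\Var(\RE{\REF{E}})$ (so that deleting it is meaningful) and that $x$ does not appear in $P$, $N$, $C$, or $M$ (so that the subtraction does not disturb the intended union); both facts are immediate from the definition of $\REF{E}$ and the freshness hypothesis on $x$ in Theorem~\ref{thrm:PNCMequation}.
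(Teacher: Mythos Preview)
Your proposal is correct and matches the paper's intent: the paper states this result as an immediate corollary without an explicit proof, and your argument---unfolding $\REF{E}$, reading off the variable set, then applying the third clause of Lemma~\ref{lmm:PNCMform} together with the freshness of $x$---is exactly the intended justification.
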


\subsection{Substitution}
As the refined equation removes the self-dependencies, we can replace all occurrences of variables in $\RE{\mathbf{E}}$ with their expressions in the refined equation.

\begin{defn}[Substitution on Equations]
Let $\mathbf{E}$ be a variable equation set, $X$ be a variable, and $E$ be a refined equation of $\mathbf{E}(X)$ (i.e., $X = \LV{E}$).
We define substitution of equations as $\mathbf{E} [E] = \{ E \} \cup \{ \LV{E'} \equiv \RE{E'} [\RE{E}/X] | E' \in \mathbf{E} \setminus \{ \mathbf{E}(X) \} \}$.
Specifically, we write $\mathbf{E} [\REF{\mathbf{E}(X)}]$ as $\mathbf{E} [X]$.
\end{defn}

Substitution of the refined equation maintains the solutions, as a refined equation does.

\begin{thrm} \label{thrm:valuation_substitution}
Let $\mathbf{E}$ be a variable equation set, $X$ be a variable, $E$ be a refined equation of $\mathbf{E}(X)$ (i.e., $\LV{E} = X$), and $V$ be a set of all valuations for $\Var(E) \cup \Var(\mathbf{E})$.
If $(\Var(E) \setminus \Var(\mathbf{E}(X))) \cap \Var(\mathbf{E}) = \emptyset$, the following two formulas are satisfied.
\begin{itemize}
 \item $\forall v \in V. \  v \mbox{ is a solution of } \mathbf{E} [E] \Rightarrow \  v \mbox{ is a solution of } \mathbf{E}$
 \item $\forall v \in V. \  v \mbox{ is a solution of } \mathbf{E} \Rightarrow \exists v' \in V. \  v' \mbox{ is a solution of } \mathbf{E} [E] \land \forall x \in \mathrm{Var}(\mathbf{E}). v(x) = v'(x)$
\end{itemize}
\end{thrm}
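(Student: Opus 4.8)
\section*{Proof proposal}

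The plan is to establish the two bullet points separately, each time invoking one of the two defining conditions of a refined equation (Definition~\ref{defn:refined_equation}) together with the substitution lemma (Lemma~\ref{lmm:substitution}). The pivotal observation, used in both directions, is that as soon as a valuation $w$ satisfies the equation $E$, i.e. $w(X) = \evalV{\RE{E}}{w}$, the modified valuation $w[\RE{E}/X]$ equals $w$ as a function: they agree on $X$ by that very equation and on every other variable by the definition of $w[\RE{E}/X]$. Lemma~\ref{lmm:substitution} then gives $\evalV{\RE{E'}[\RE{E}/X]}{w} = \evalV{\RE{E'}}{w[\RE{E}/X]} = \evalV{\RE{E'}}{w}$ for each $E' \in \mathbf{E}$, so under the hypothesis that $E$ holds I may freely replace the substituted right-hand sides occurring in $\mathbf{E}[E]$ by the original right-hand sides of $\mathbf{E}$, and conversely.

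For the first bullet, let $v \in V$ solve $\mathbf{E}[E]$. Since $v$ satisfies $E$, the first condition of Definition~\ref{defn:refined_equation} immediately yields $v(X) = \evalV{\RE{\mathbf{E}(X)}}{v}$, so $v$ satisfies the original equation for $X$. For every other $E' \in \mathbf{E} \setminus \{\mathbf{E}(X)\}$, the pivotal observation rewrites $v(\LV{E'}) = \evalV{\RE{E'}[\RE{E}/X]}{v}$ as $v(\LV{E'}) = \evalV{\RE{E'}}{v}$; hence $v$ solves $\mathbf{E}$.

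For the second bullet, let $v \in V$ solve $\mathbf{E}$. Because $v(X) = \evalV{\RE{\mathbf{E}(X)}}{v}$, the second condition of Definition~\ref{defn:refined_equation} supplies a valuation $v'$ that satisfies $E$ and agrees with $v$ on every variable outside the fresh set $\Var(\RE{E}) \setminus \Var(\RE{\mathbf{E}(X)})$. I would first note that, since $\LV{E} = X \notin \Var(\RE{E})$, this fresh set equals $\Var(E) \setminus \Var(\mathbf{E}(X))$, so the disjointness hypothesis $(\Var(E) \setminus \Var(\mathbf{E}(X))) \cap \Var(\mathbf{E}) = \emptyset$ says precisely that $v$ and $v'$ agree on all of $\Var(\mathbf{E})$ — this is the required agreement clause. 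To see that $v'$ solves $\mathbf{E}[E]$: equation $E$ holds by construction, and for each other $E'$ the variable $\LV{E'}$ and those of $\RE{E'}$ all lie in $\Var(\mathbf{E})$, hence are non-fresh, so $v'(\LV{E'}) = v(\LV{E'}) = \evalV{\RE{E'}}{v} = \evalV{\RE{E'}}{v'}$; applying the pivotal observation to $v'$ then converts the last expression to $\evalV{\RE{E'}[\RE{E}/X]}{v'}$, as needed.

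The evaluation manipulations are routine once Lemma~\ref{lmm:substitution} is in hand; the genuine care lies in the variable bookkeeping. The main obstacle is verifying exactly that the fresh variables introduced by refinement coincide with $\Var(E) \setminus \Var(\mathbf{E}(X))$ and that the theorem's disjointness hypothesis isolates them from $\Var(\mathbf{E})$, since this single fact simultaneously delivers the agreement clause and lets each non-$X$ equation transfer between $v$ and $v'$. A secondary point demanding attention is keeping straight which of the two refined-equation conditions underlies soundness of $\mathbf{E}[E]$ (the first) and which underlies its completeness (the second).
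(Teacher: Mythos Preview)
Your proposal is correct and follows essentially the same line as the paper's proof: both hinge on the observation that once a valuation satisfies $E$, Lemma~\ref{lmm:substitution} makes the substituted and unsubstituted right-hand sides coincide, and then the two defining clauses of a refined equation handle soundness and completeness respectively. Your treatment is in fact more explicit than the paper's about the variable bookkeeping---identifying the fresh set with $\Var(E)\setminus\Var(\mathbf{E}(X))$ and showing how the disjointness hypothesis transfers the non-$X$ equations from $v$ to $v'$---whereas the paper compresses this into a single sentence.
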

\begin{proof}
For any valuation $v$ and variable equation $E' \in \mathbf{E} [E]$, $\evalV{\RE{E'} [\RE{E}/X]}{v} = \evalV{\RE{E'}}{v[\RE{E}/X]}$ (from Lemma~\ref{lmm:substitution}). 
If $v$ is a solution of $E$, then $\evalV{\RE{E}}{v} = v(X)$, i.e., $v[\RE{E}/X] = v$, and $\evalV{\RE{E'} [\RE{E}/X]}{v} = \evalV{\RE{E'}}{v}$.
As $E$ is included in both $\mathbf{E} [E]$ and $\{ E \} \cup \mathbf{E} \setminus \{ \mathbf{E}(X) \}$, if $v$ is a solution of one, then it is also a solution of the other.

When $v$ is a solution of $\{ E \} \cup \mathbf{E} \setminus \{ \mathbf{E}(X) \}$, it is also a solution of $\mathbf{E}(X)$ because $E$ is a refined equation of $\mathbf{E}(X)$.
This means that $v$ is a solution of $\mathbf{E}$.
When $v$ is a solution of $\mathbf{E}$, there exists a solution $v'$ for $E$ whose assignments for $(\Var \setminus \Var(E)) \cup \Var(\mathbf{E}(X))$ are the same as $v$.
The assumption $(\Var(E) \setminus \Var(\mathbf{E}(X))) \cap \Var(\mathbf{E}) = \emptyset$ shows that $v'$ is also a solution for $\mathbf{E} \setminus \{ \mathbf{E}(X) \}$.
This means that $v'$ is a solution of $\{ E \} \cup \mathbf{E} \setminus \{ \mathbf{E}(X) \}$.

Therefore, the two formulas are satisfied.
\end{proof}

\begin{lmm} \label{lmm:refine_decrease}
Let $\mathbf{E}$ be a variable equation set, $X \in \LV{\mathbf{E}}$ be a variable.
Then, $\LV{\mathbf{E} [X]} \cap \RE{\mathbf{E} [X]} = \LV{\mathbf{E}} \cap \RE{\mathbf{E}} \setminus \{ X \}$.
\end{lmm}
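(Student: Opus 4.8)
The plan is to exploit the fact that the substitution $\mathbf{E}[X]$ leaves the left-hand variables untouched while eliminating every occurrence of $X$ from the right-hand sides. Throughout I read $\LV{\mathbf{E}} \cap \RE{\mathbf{E}}$ as the set of left-hand variables that occur in some right-hand expression, i.e. $\{ Y \in \LV{\mathbf{E}} : Y \in \Var(\RE{E}) \text{ for some } E \in \mathbf{E} \}$, since $\LV{\mathbf{E}}$ is a set of variables and $\RE{\mathbf{E}}$ a set of expressions. First I would record that $\LV{\mathbf{E}[X]} = \LV{\mathbf{E}}$: the refined equation $\REF{\mathbf{E}(X)}$ still has left variable $X$, and each substituted equation $\LV{E'} \equiv \RE{E'}[\RE{\REF{\mathbf{E}(X)}}/X]$ keeps its left variable $\LV{E'}$; since the left variables of a variable equation set are distinct, removing $\mathbf{E}(X)$ deletes exactly $X$, which $\REF{\mathbf{E}(X)}$ then restores. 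Hence the claimed identity reduces to determining which left variables occur in the right-hand expressions of $\mathbf{E}[X]$.

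Next I would isolate the two properties of the refined equation that drive the argument. Writing $q = \RE{\REF{\mathbf{E}(X)}}$ and $(P,N,C,M) = \mathcal{R}_X(\RE{\mathbf{E}(X)})$, Lemma~\ref{lmm:PNCMform} gives that none of $P,N,C,M$ contains $X$, so $q = P x \lor U(N \lor C x) \lor M$ contains no occurrence of $X$ either (the introduced variable $x$ is fresh, hence $x \neq X$). Corollary~\ref{coro:ref_variable} supplies the occurrence identity $\Var(q) \setminus \{x\} = \Var(\RE{\mathbf{E}(X)}) \setminus \{X\}$, and freshness of $x$ guarantees $x \notin \LV{\mathbf{E}}$. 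These are precisely the facts needed: $q$ is $X$-free, and its variables other than $x$ are exactly the variables of $\RE{\mathbf{E}(X)}$ other than $X$.

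The core is then a direct occurrence analysis based on the routine substitution identity $\Var(\RE{E'}) \setminus \{X\} \subseteq \Var(\RE{E'}[q/X]) \subseteq (\Var(\RE{E'}) \setminus \{X\}) \cup \Var(q)$, proved by a trivial induction on the expression from the definition of $p[p'/x]$. Since the right-hand expressions of $\mathbf{E}[X]$ are $q$ together with the $\RE{E'}[q/X]$, the upper bound and $X \notin \Var(q)$ show that $X$ occurs in no right-hand side of $\mathbf{E}[X]$, so $X \notin \LV{\mathbf{E}[X]} \cap \RE{\mathbf{E}[X]}$. For a left variable $Y \neq X$ I would argue both inclusions. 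If $Y$ occurs on some right side of $\mathbf{E}$: an occurrence in $\RE{\mathbf{E}(X)}$ reappears in $q$ by the Corollary identity (as $Y \neq X$), and an occurrence in some other $\RE{E'}$ survives by the lower bound of the substitution identity (as $Y \neq X$); either way $Y$ occurs on a right side of $\mathbf{E}[X]$. Conversely, if $Y$ occurs on some right side of $\mathbf{E}[X]$, the upper bounds place it in some $\Var(\RE{E'}) \setminus \{X\}$ or in $\Var(q)$; in the latter case $Y \neq x$ (because $x \notin \LV{\mathbf{E}}$) forces $Y \in \Var(\RE{\mathbf{E}(X)}) \setminus \{X\}$. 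In all subcases $Y$ occurs on a right side of $\mathbf{E}$.

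Combining, the left variables occurring in $\RE{\mathbf{E}[X]}$ are exactly those occurring in $\RE{\mathbf{E}}$ with $X$ deleted, which is the stated equality (note $A \cap (B \setminus \{X\}) = (A \cap B) \setminus \{X\}$ makes the two possible parsings of the right-hand side agree). I expect the only real difficulty to be bookkeeping rather than conceptual: keeping the freshly introduced variable $x$ out of the intersection with $\LV{\mathbf{E}}$, and invoking Corollary~\ref{coro:ref_variable} with the correct orientation so that occurrences of other left variables are tracked in both directions. A secondary point worth stating explicitly at the outset is the reading of $\LV{\cdot} \cap \RE{\cdot}$ as variable occurrence rather than literal expression membership.
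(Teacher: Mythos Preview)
Your proposal is correct and follows essentially the same approach as the paper, which simply states that the lemma ``is immediately proven from Corollary~\ref{coro:ref_variable}.'' You have fleshed out exactly the occurrence-tracking that the paper leaves implicit: the invariance $\LV{\mathbf{E}[X]}=\LV{\mathbf{E}}$, the fact that the refined right-hand side $q$ is $X$-free, the substitution bounds $\Var(\RE{E'})\setminus\{X\}\subseteq\Var(\RE{E'}[q/X])\subseteq(\Var(\RE{E'})\setminus\{X\})\cup\Var(q)$, and the role of freshness of $x$ in keeping it out of $\LV{\mathbf{E}}$. Your explicit remark that $\LV{\mathbf{E}}\cap\RE{\mathbf{E}}$ must be read as $\LV{\mathbf{E}}\cap\Var(\RE{\mathbf{E}})$ is well taken; the paper itself uses the latter form when it invokes this lemma in the proof of Theorem~\ref{thrm:general_construction}.
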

This is immediately proven from Corollary~\ref{coro:ref_variable}.

\subsection{General Solutions}

\begin{thrm} \label{thrm:valuation_general}
Let $\mathbf{E}$ be a variable equation set where $\LV{\mathbf{E}} \cap \Var(\RE{\mathbf{E}}) = \emptyset$.
The following function $\mathcal{A}$ is a bijection from valuations for $\Var(\RE{\mathbf{E}})$ to solutions of $\mathbf{E}$.
$$\mathcal{A}(v) = v' \mbox{ where } v'(x) = \left\{
 \begin{array}{l@{\quad}l}
  v(x) & (x \in \Var(\RE{\mathbf{E}})) \\
  \evalV{\mathbf{E}(x)}{v} & (x \in \LV{\mathbf{E}} \\
 \end{array}
\right .$$
\end{thrm}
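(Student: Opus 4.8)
The plan is to verify in turn that $\mathcal{A}$ is well-defined, that it maps valuations into the set of solutions, and that it is injective and surjective. The hypothesis $\LV{\mathbf{E}} \cap \Var(\RE{\mathbf{E}}) = \emptyset$ is what drives the whole argument. For well-definedness it guarantees that the two clauses defining $v'$ never conflict, since no variable can be simultaneously a left-hand variable and a variable occurring on some right-hand side; together the two clauses cover exactly $\LV{\mathbf{E}} \cup \Var(\RE{\mathbf{E}})$, which I will take as the set of relevant variables. More importantly, the hypothesis forces every right-hand expression $\RE{\mathbf{E}(X)}$ to contain only variables from $\Var(\RE{\mathbf{E}})$, on which $v$ and $v' = \mathcal{A}(v)$ agree by the first clause of the definition. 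Throughout I will use the elementary fact, provable by a routine induction on expressions, that $\evalV{p}{v_1} = \evalV{p}{v_2}$ whenever $v_1$ and $v_2$ agree on $\Var(p)$; I read $\evalV{\mathbf{E}(x)}{v}$ as $\evalV{\RE{\mathbf{E}(x)}}{v}$, the evaluation of the right-hand side of the equation for $x$.

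First I would show that $v' = \mathcal{A}(v)$ is a solution of $\mathbf{E}$, which is the heart of the proof. Fix an equation $X \equiv G$ in $\mathbf{E}$, so that $G = \RE{\mathbf{E}(X)}$. By definition $v'(X) = \evalV{G}{v}$, and since $\Var(G) \subseteq \Var(\RE{\mathbf{E}})$ while $v'$ agrees with $v$ on $\Var(\RE{\mathbf{E}})$, the coincidence fact yields $\evalV{G}{v} = \evalV{G}{v'}$. Hence $v'(X) = \evalV{G}{v'}$, which is precisely the condition for $v'$ to satisfy this equation; as the equation was arbitrary, $v'$ solves $\mathbf{E}$. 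This is the only step that genuinely uses the disjointness assumption: the absence of self-dependency means that every right-hand side can be evaluated from the free variables alone, so the values $\mathcal{A}$ assigns to the left-hand variables are automatically mutually consistent.

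Injectivity is immediate. If $\mathcal{A}(v_1) = \mathcal{A}(v_2)$, then restricting both sides to $\Var(\RE{\mathbf{E}})$ — the common domain of $v_1$ and $v_2$ — gives $v_1(x) = v_2(x)$ for all $x \in \Var(\RE{\mathbf{E}})$, so $v_1 = v_2$. For surjectivity, given any solution $w$ of $\mathbf{E}$, I would let $v$ be the restriction of $w$ to $\Var(\RE{\mathbf{E}})$ and check $\mathcal{A}(v) = w$. On $\Var(\RE{\mathbf{E}})$ the equality holds by definition of $v$. For $X \in \LV{\mathbf{E}}$, the fact that $w$ is a solution gives $w(X) = \evalV{\RE{\mathbf{E}(X)}}{w}$; since $\RE{\mathbf{E}(X)}$ involves only variables in $\Var(\RE{\mathbf{E}})$, on which $w$ and $v$ agree, the coincidence fact makes this equal to $\evalV{\RE{\mathbf{E}(X)}}{v} = \mathcal{A}(v)(X)$. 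Thus $\mathcal{A}(v) = w$, establishing surjectivity and hence the bijection.

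I do not expect a genuine obstacle here; the proof is essentially bookkeeping, and the real mathematical content was already absorbed into the refinement and substitution machinery of the previous subsections, which is what legitimately produces an equation set satisfying $\LV{\mathbf{E}} \cap \Var(\RE{\mathbf{E}}) = \emptyset$. The only point demanding care is keeping the notation straight: tracking the domain of each valuation and consistently interpreting $\evalV{\mathbf{E}(x)}{v}$ as the evaluation of the right-hand expression. The single substantive observation is the one in the second step, and it is exactly the payoff of having eliminated all self-dependencies.
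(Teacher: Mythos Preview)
Your proposal is correct and follows essentially the same route as the paper: injectivity by comparing the restrictions to $\Var(\RE{\mathbf{E}})$, and surjectivity by restricting a given solution and checking that $\mathcal{A}$ recovers it via the coincidence fact. If anything, your argument is more complete than the paper's, which omits the verification that $\mathcal{A}(v)$ actually lands in the set of solutions of $\mathbf{E}$; you supply that step explicitly in your second paragraph.
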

\begin{proof}
For any two valuations $v$ and $v'$ for $\Var(\RE{\mathbf{E}})$, if $v \neq v'$ then there exists $x$ such that $v(x) \neq v'(x)$.
From the definition, $\mathcal{A}(v)(x) = v(x) \neq v'(x) = \mathcal{A}(v')(x)$, i.e, $\mathcal{A}(v) \neq \mathcal{A}(v')$ and therefore $\mathcal{A}$ is injective.

Assume that $v$ is a solution of $\mathbf{E}$.
Let $v'$ be the valuation of $\Var(\RE{\mathbf{E}})$ such that $v(x) = v'(x)$ for any $x \in \Var(\RE{\mathbf{E}})$.
For any variable $X \in \Var(\LV{\mathbf{E}})$,
$$\begin{array}{lrl@{\qquad}l}
\mathcal{A}(v')(X) & = & \evalV{\RE{\mathbf{E}(X)}}{v'} & \\
                   & = & \evalV{\RE{\mathbf{E}(X)}}{v} & (\mbox{from the definition of } v') \\
                   & = & v(X) & (v \mbox{ is a solution of } \mathbf{E}(X))
\end{array}$$
Hence, $\mathcal{A}(v') = v$ and therefore $\mathcal{A}$ is surjective.

Consequently, $\mathcal{A}$ is a bijection.
\end{proof}

According to Theorem~\ref{thrm:valuation_general}, we can focus on the valuations of the right-side expressions of the variable equation set, rather than its solutions (valuations for \emph{all} variables).
Enumerating the valuations shows all possible solutions, which can be considered constant allocators.
Therefore, the variable equation set that satisfies the conditions of Theorem~\ref{thrm:valuation_general} can be seen as a general allocator for the framework whose arguments are the left-side variables of the set.
Formally, the proposition is as follows.

\begin{thrm} \label{thrm:general_construction}
Let $\langle \AR, \ATT \rangle$ be a framework where $\AR = \{ A_1, A_2, \ldots A_n \}$.
$E(A) = \RE{\mathbf{E}(A)}$ where $\mathbf{E} = \{ A_i \equiv \bigwedge_{(A, A_i) \in \ATT} \neg A | 1 \leq i \leq n \} [A_1] [A_2] \ldots [A_n]$, is a general allocator of the framework.
\end{thrm}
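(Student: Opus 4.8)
The plan is to view the construction as successively eliminating self-dependencies from the system of completeness equations while provably preserving its solution set, and then to read off the general allocator from the final, self-dependency-free system via the bijection of Theorem~\ref{thrm:valuation_general}. Write $\mathbf{E}_0 = \{ A_i \equiv \bigwedge_{(A, A_i) \in \ATT} \neg A \mid 1 \le i \le n\}$ and $\mathbf{E}_k = \mathbf{E}_0[A_1]\cdots[A_k]$, so that $\mathbf{E} = \mathbf{E}_n$ and $E(A) = \RE{\mathbf{E}(A)}$. By Lemma~\ref{lmm:solution_constant}, the solutions of $\mathbf{E}_0$ (as valuations on $\AR$) are precisely the constant allocators of the framework, so the two properties I must establish---that $E$ is complete and that every constant allocator arises as $E_v$---both reduce to statements about how solutions propagate through the substitution chain.

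First I would track the self-dependencies. Each step $\mathbf{E}_{k-1}[A_k]$ refines $\mathbf{E}_{k-1}(A_k)$, introducing one fresh variable and, by the definition of substitution, replacing every occurrence of $A_k$ in the other right-hand sides by an expression free of $A_k$. Applying Lemma~\ref{lmm:refine_decrease} at each step gives $\LV{\mathbf{E}_k} \cap \Var(\RE{\mathbf{E}_k}) = (\LV{\mathbf{E}_0} \cap \Var(\RE{\mathbf{E}_0})) \setminus \{A_1, \ldots, A_k\}$; since $\LV{\mathbf{E}_0} = \AR = \{A_1, \ldots, A_n\}$, after the last step this intersection is empty. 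Thus $\mathbf{E}$ satisfies the hypothesis of Theorem~\ref{thrm:valuation_general}, and moreover each expression $E(A) = \RE{\mathbf{E}(A)}$ contains only the freshly introduced variables and no argument variable.

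Next I would propagate solutions. At step $k$ the only variable in $\REF{\mathbf{E}_{k-1}(A_k)}$ not already present in $\mathbf{E}_{k-1}(A_k)$ is the fresh variable supplied by refinement (Corollary~\ref{coro:ref_variable}), so the freshness hypothesis $(\Var(E) \setminus \Var(\mathbf{E}_{k-1}(A_k))) \cap \Var(\mathbf{E}_{k-1}) = \emptyset$ of Theorem~\ref{thrm:valuation_substitution} holds at every step. Chaining the two implications of that theorem yields a correspondence between solutions of $\mathbf{E}_0$ and solutions of $\mathbf{E}$ that agree on every argument variable. Combined with the bijection $\mathcal{A}$ of Theorem~\ref{thrm:valuation_general}, this means: for each valuation $v$ on the fresh variables, $E_v(A) = \evalV{E(A)}{v} = \mathcal{A}(v)(A)$ is a solution of $\mathbf{E}_0$, hence a constant allocator; and conversely every constant allocator $E'$, being a solution of $\mathbf{E}_0$, extends to a solution of $\mathbf{E}$, which equals $\mathcal{A}(v)$ for its restriction $v$ to the fresh variables, so $E' = E_v$. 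The latter is exactly the generality condition of Definition~\ref{def:general}.

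Finally, completeness of $E$ follows because $E_v$ is a constant allocator---hence satisfies $\evalV{E(A)}{v} = \evalV{\bigwedge_{(A',A)\in\ATT}\neg E(A')}{v}$---for every valuation $v$ on the fresh variables; since both $E(A)$ and $\bigwedge_{(A',A)\in\ATT} \neg E(A')$ contain only fresh variables, agreement across all such valuations forces the equivalence $E(A) \equiv \bigwedge_{(A',A)\in\ATT} \neg E(A')$ over all valuations. I expect the main obstacle to be the bookkeeping in the solution-propagation step: verifying that the freshness hypothesis of Theorem~\ref{thrm:valuation_substitution} survives each substitution, so that fresh variables introduced earlier are never captured later, and that the resulting correspondence of solutions is faithfully carried by $\mathcal{A}$---rather than the algebra, which is routine once the variable-tracking is pinned down.
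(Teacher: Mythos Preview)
Your proposal is correct and follows essentially the same route as the paper: apply Lemma~\ref{lmm:refine_decrease} step by step to obtain $\LV{\mathbf{E}}\cap\Var(\RE{\mathbf{E}})=\emptyset$, invoke Theorem~\ref{thrm:valuation_substitution} along the chain to relate solutions of $\mathbf{E}$ and of $\mathbf{E}_0$, then use Theorem~\ref{thrm:valuation_general} and Lemma~\ref{lmm:solution_constant} to identify the result with the constant allocators. Your version is in fact more careful than the paper's at two points the paper leaves implicit---checking the freshness hypothesis of Theorem~\ref{thrm:valuation_substitution} at each step via Corollary~\ref{coro:ref_variable}, and supplying a separate argument that $E$ is complete (not merely that every constant allocator is an instantiation of it)---so there is no gap.
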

\begin{proof}
For $1 \leq t \leq n$, we write $\mathbf{E}_t = \{ A_i \equiv \bigwedge_{(A, A_i) \in \ATT} \neg A | 1 \leq i \leq n \} [A_1] [A_2] \ldots [A_t]$.
From Lemma~\ref{lmm:refine_decrease},
$$\begin{array}{lrl}
\LV{\mathbf{E}} \cap \Var(\RE{\mathbf{E}}) & = & \LV{\mathbf{E}_n} \cap \Var(\RE{\mathbf{E}_n}) \\
                                           & = & \LV{\mathbf{E}_{n-1}} \cap \Var(\RE{\mathbf{E}_{n-1}}) \setminus \{ A_n \} \\
                                           & = & \LV{\mathbf{E}_{n-2}} \cap \Var(\RE{\mathbf{E}_{n-2}}) \setminus \{ A_{n-1}, A_n \} \\
                                           & = & \ldots \\
                                           & = & \LV{\mathbf{E}_1} \cap \Var(\RE{\mathbf{E}_1}) \setminus \{ A_2, \ldots, A_{n-1}, A_n \} \\
                                           & = & \AR \cap \Var(\{ \bigwedge_{(A, A_i) \in \ATT} \neg A | 1 \leq i \leq n \}) \setminus \AR \\
                                           & = & \emptyset
\end{array}$$
From Theorem~\ref{thrm:valuation_substitution}, each solution of $\mathbf{E}$ is denoted by the corresponding valuation of $\Var(\RE{\mathbf{E}})$.
By Theorem~\ref{thrm:valuation_general}, the valuations of $\Var(\RE{\mathbf{E}})$ give all of the solutions of the original set, $\{ A_i \equiv \bigwedge_{(A, A_i) \in \ATT} \neg A | 1 \leq i \leq n \}$.
The solutions of $\{ A_i \equiv \bigwedge_{(A, A_i) \in \ATT} \neg A | 1 \leq i \leq n \}$ correspond to the constant allocators for the framework (Lemma~\ref{lmm:solution_constant}), and therefore, $E(A) = \RE{\mathbf{E}(A)}$ is a general allocator.
\end{proof}

We use Figure~\ref{fig:AFexample3} as an example.
The initial variable equation set is $\{ A_1 \equiv \neg A_2, A_2 \equiv \neg A_1, A_3 \equiv \neg A_4, A_4 \equiv \neg A_3, A_5 \equiv \neg A_2 \neg A_4 \}$.
The variable equation for $A_1$ is already its refined equation, so we substitute that equation into the other equations.
After substitution, the equation for $A_2$ only changes to $A_2 \equiv \neg (\neg A_2)$.
Its equivalent equation is $A_2 \equiv A_2$, and the refined equation is $A_2 \equiv a_2$.
By a similar process for $A_3$ and $A_4$, the variable equation set becomes $\{ A_1 \equiv \neg a_2, A_2 \equiv a_2, A_3 \equiv \neg a_4, A_4 \equiv a_4, A_5 \equiv \neg a_2 \neg a_4 \}$.
From this set, we obtain the general allocator $E$ such that $E(1) = \neg a_2$, $E(2) = a_2$, $E(3) = \neg a_4$, $E(4) = a_4$ and $E(5) = \neg a_2 \neg a_4$.

We have implemented the algorithm as a prototype solver\footnote{\url{https://bitbucket.org/chiguri/allocation/}}.
Currently, the solver does not optimize some expressions about $U$, such as $U \lor a \neg a \equiv U$ and $U a \neg a \equiv a \neg a$.
As shown in Theorem~\ref{thrm:PNCMequation}, $U$ occurs in the refined equation, so the results of the solver are still redundant.
As the complexity of the algorithm depends on the size of the refined equations, redundant expressions impede the speed of the solver.
In contrast, the optimization for $U$ is not as simple as that for $T$ and $F$ in Lemma~\ref{lmm:expressionpattern}.
The analysis of the complexity is a future work.

\section{Local Allocation} \label{sec:localallocation}
As the size of a given framework is so big, we split it into ``blocks''.
Each block may be affected by other blocks.
The allocation method offers a method for abstracting such effects as logical variables.
We propose using the local allocation method reported in \cite{Moriguchi18}.
In this section, we use the term ``local allocation'' to refer to the method for blocks, and the term ``global allocation'' to refer to the method for the global framework (as in the above sections).

However, in \cite{Moriguchi18}, we only discussed the completeness of local allocation and acyclic cases for blocks.
Here, we introduce generality in a similar manner to global allocation.
Also, we propose a method to build general local allocators from a block using the equation-solving methods described in Section~\ref{sec:equation}.

\subsection{Block and Variable Argument}
First, we introduce the notion of blocks.

\begin{defn}[Block]
The triple $\langle \AR, \VArg, \ATT \rangle$ where $\AR$ and $\VArg$ are finite disjoint sets of arguments and $\ATT \subseteq (\AR \cup \VArg) \times \AR$ is called a block.
Each argument in $\AR$ is called the actual argument of the block, and each argument in $\VArg$ is called the variable argument of the block.
\end{defn}

A block is intended to denote a part of the framework.
For example, Figure~\ref{fig:AFexample5} shows the whole framework (the left-hand figure) and the block (the right-hand figure).
The block consists of the actual arguments $\{ 1, 2, 3 \}$ and the variable argument $a$.
The rest of the framework, arguments $\{ 4, 5, 6 \}$, constitutes the other block, which is isomorphic to the block described above (whose variable argument corresponds to $3$).
Note that the block can be seen as a part of the AF in Figure~\ref{fig:allocation_example}.
In this case, $a$ corresponds to $5$ in the framework.

\begin{figure}[tbp]
 \begin{center}
 \includegraphics[height=2cm]{./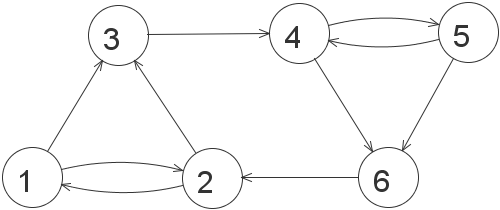} \qquad
 \includegraphics[height=2cm]{./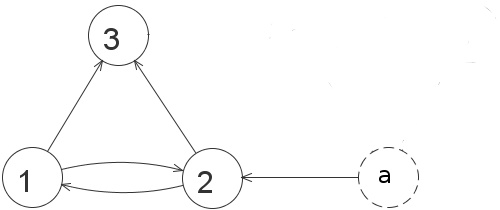}
 \end{center}
 \caption{An example of the framework and the block. The left-hand figure shows the original framework, and the right-hand shows the block.}
 \label{fig:AFexample5}
\end{figure}

A block in the framework depends only on actual arguments.
From this, we can build some blocks covering all of the framework by splitting a set of arguments into sets.
Formally, such ``splitting'' is defined as follows.

\begin{defn}[Splitter] \label{def:splitter}
Let $\langle \AR, \ATT \rangle$ be an argumentation framework.
A set of blocks $\mathbf{B}$ that satisfies the following conditions is called a splitter of the framework.
\begin{itemize}
 \item $\AR = \bigcup_{\langle \AR_i, \VArg_i, \ATT_i \rangle \in \mathbf{B}} \AR_i$, $\ATT =  \bigcup_{\langle \AR_i, \VArg_i, \ATT_i \rangle \in \mathbf{B}} \ATT_i$.
 \item For any distinct blocks $\langle \AR_1, \VArg_1, \ATT_1 \rangle, \langle \AR_2, \VArg_2, \ATT_2 \rangle \in \mathbf{B}$, $\AR_1 \cap \AR_2 = \emptyset$.
\end{itemize}
\end{defn}

From the definition, no attacks are shared by the two distinct blocks.
Splitters can be used to build allocators from \emph{local allocators} (defined by the following) for the blocks.
\begin{defn}[Local Allocator]
Let $\langle \AR, \VArg, \ATT \rangle$ be a block.
A function from $\AR \cup \VArg$ to expressions is called a local allocator.
\end{defn}

The completeness of the local allocator is defined:
\begin{defn}[Complete Local Allocator]
A local allocator $E_l$ for the block $B$ is \emph{complete} iff the following conditions are satisfied.
\begin{itemize}
 \item For any variable argument $A$, $E_l(A) = a$ for some variable $a$ and $E_l(A) \neq E_l(A')$ for any other variable argument $A'$.
 \item For any actual argument $A$, $E_l(A) \equiv \bigwedge_{(A', A) \in \ATT} \neg E_l(A')$.
\end{itemize}
\end{defn}

In \cite{Moriguchi18}, we discussed the composition of local allocators and building of global allocators only in the case of an acyclic splitter.
In this section, we discuss the algorithm for building (local) allocators without such a limitation.

\subsection{Constantness and Generality of Local Allocation}
Here, we define constant local allocators and general local allocators.
As general global allocators are defined using constant global allocators, we will use constant local allocators in the definition of the generality of local allocation.
\begin{defn}[Constant Local Allocator]
A local allocator $E_l$ is constant iff for any argument $A$, $E_l(A)$ is a logical constant and satisfies $E_l(A) \equiv \bigwedge_{(A', A) \in \ATT} \neg E_l(A')$.
\end{defn}

Note that a constant local allocator is \emph{NOT} complete, in contrast to a constant global allocator, which is complete.
However, a constant local allocator is instantiated from a complete local allocator, as in Theorem~\ref{thm:instantiatevaluation}.

\begin{lmm}
For the complete local allocator $E_l$ and valuation $v$, $E'_l(A) = \evalV{E_l(A)}{v}$ is a constant local allocator.
\end{lmm}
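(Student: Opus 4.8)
The plan is to mirror the global counterpart, Theorem~\ref{thm:instantiatevaluation}, which already establishes that evaluating a complete global allocator under a valuation yields a constant global allocator. I would verify the two clauses of the definition of a constant local allocator for $E'_l(A) = \evalV{E_l(A)}{v}$ in turn, treating actual and variable arguments separately.

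First I would check that $E'_l(A)$ is a logical constant for every argument $A$. This is immediate from the definition of evaluation: $\evalV{\cdot}{v}$ always returns a value in $\{ T, F, U \}$, and these are exactly the logical constants. In particular, for a variable argument $A$ with $E_l(A) = a$ we obtain $E'_l(A) = v(a)$, which is a constant. Next I would verify the completeness equation for each actual argument $A$. Since $E_l$ is complete, $E_l(A) \equiv \bigwedge_{(A', A) \in \ATT} \neg E_l(A')$, and by the definition of $\equiv$ the two sides evaluate identically under every valuation, in particular under $v$. Reading the left side as $E'_l(A)$ and pushing $\evalV{\cdot}{v}$ through $\land$ and $\neg$ via their compositional Kleene definitions, the right side becomes $\bigwedge_{(A', A) \in \ATT} \neg \evalV{E_l(A')}{v} = \bigwedge_{(A', A) \in \ATT} \neg E'_l(A')$, now a conjunction over the constants $E'_l(A')$. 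Hence $E'_l(A)$ and $\bigwedge_{(A', A) \in \ATT} \neg E'_l(A')$ denote the same constant value; since both are built from constants they are equivalent, giving exactly the required $E'_l(A) \equiv \bigwedge_{(A', A) \in \ATT} \neg E'_l(A')$.

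The one point that needs care is the role of the variable arguments. By the block convention $\ATT \subseteq (\AR \cup \VArg) \times \AR$, attacks target only actual arguments, so variable arguments carry no incoming attacks; the equational clause is therefore a constraint on the actual arguments alone, while each variable argument is simply sent to the constant $v(a)$. This is precisely what distinguishes a constant local allocator from a complete one — the variable arguments now receive constants rather than distinct variables — and it matches the preceding remark that a constant local allocator is not complete. Apart from this bookkeeping, the argument is a routine re-run of the compositionality of evaluation used in Theorem~\ref{thm:instantiatevaluation}, so I do not anticipate a genuine obstacle; the only thing to state carefully is that the verification of the equation is carried out over the actual arguments.
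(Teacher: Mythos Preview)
Your proposal is correct and matches the paper's approach: the paper does not spell out a proof but simply remarks that the result follows ``as in Theorem~\ref{thm:instantiatevaluation},'' which is precisely the strategy you adopt. Your additional care in separating actual from variable arguments---noting that $\ATT \subseteq (\AR \cup \VArg) \times \AR$ means the equational clause constrains only actual arguments while variable arguments receive the constant $v(a)$---is the right bookkeeping and, if anything, is more explicit than the paper itself.
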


Generality is also defined similarly in the case of global allocation.

\begin{defn}[General Local Allocator]
A complete local allocator $E_l$ is general iff for any constant allocator $E'_l$ there exists a valuation $v$ such that $E'_l(A) = \evalV{E_l(A)}{v}$.
\end{defn}

As we can see, when a block does not contain any variable arguments, it can be considered as an argumentation framework.
Moreover, in this case, complete/constant/general local allocators are complete/constant/general global allocators.

\begin{example} \label{example:cutblock}
Again, we use the right-hand figure in Figure~\ref{fig:AFexample5} as an example of the block.
It has three actual arguments $\{ 1, 2, 3 \}$, and its variable argument is $\{ a \}$.
As it is a complete local allocator, we can assign expressions to the arguments, such as $E_l(a) = a$, $E_l(1) = a$, $E_l(2) = \neg a$, and $E_l(3) = a \neg a$.
As constant local allocators, we can assign constants to the arguments, $T, F, U$ to $a$ and $1$, $F, T, U$ to $2$ and $F, F, U$ to $3$, respectively.
These constant local allocators are obtained from $E_l$, but there is also the other constant allocator, $F$ to $a$, $2$ and $4$, and $T$ to $1$, which is not obtained from $E_l$.
So, $E_l$ is complete, but not general.
On the other hand, a local allocator $E$ such that $E(a) = a$, $E(1) = a \lor b$, $E(2) = \neg a \neg b$ and $E(3) = (a \lor b) \neg a \neg b$, is general.

$E$ shows that the acceptability of $3$ depends on not only the external argument $a$ but also the internal argument $1$ (or $2$).
This means that this block requires an acceptance of $1$ (or a negativity for $2$) in the entire framework, including this block. 
\end{example}

General local allocators represent the semantics of the block.
Fortunately, we can use an equation-solving method to construct general local allocators.

\begin{thrm} \label{thrm:local_general}
Let $\langle \AR, \VArg, \ATT \rangle$ be a block where $\AR = \{ A_1, A_2, \ldots A_n \}$ and $\VArg = \{ V_1, V_2, \ldots, V_n \}$.
A local allocator $E_l$ such that $E_l(A_i) = \RE{\mathbf{E}(A_i)}$ and $E_l(V_j) = V_j$ where $\mathbf{E} = \{ A_i \equiv \bigwedge_{(X, A_i) \in \ATT} \neg X | 1 \leq i \leq n \} [A_1] [A_2] \ldots [A_n]$, is a general local allocator for the block.
\end{thrm}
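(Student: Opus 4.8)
The plan is to mirror the proof of Theorem~\ref{thrm:general_construction} almost verbatim, treating the variable arguments as distinguished \emph{free} variables: they occur on right-hand sides but never on a left-hand side, so they are never refined and never substituted away, and they must survive into $\Var(\RE{\mathbf{E}})$. Write $\mathbf{E}_0 = \{ A_i \equiv \bigwedge_{(X, A_i) \in \ATT} \neg X \mid 1 \leq i \leq n \}$, regarded as a variable equation set with $\LV{\mathbf{E}_0} = \AR$. Since $\ATT \subseteq (\AR \cup \VArg) \times \AR$, the right-hand expressions contain only variables from $\AR \cup \VArg$, so $\Var(\RE{\mathbf{E}_0}) \subseteq \AR \cup \VArg$. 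First I would record the local analogue of Lemma~\ref{lmm:solution_constant}: a constant valuation is a solution of $\mathbf{E}_0$ if and only if it is a constant local allocator of the block, because the equations of $\mathbf{E}_0$ are exactly the completeness conditions on the actual arguments while each $V_j$ is free and may take any constant.

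Next I would run the substitution chain and, writing $\mathbf{E}_t = \mathbf{E}_0 [A_1][A_2]\ldots[A_t]$ and $\mathbf{E} = \mathbf{E}_n$, establish $\LV{\mathbf{E}} \cap \Var(\RE{\mathbf{E}}) = \emptyset$ exactly as in Theorem~\ref{thrm:general_construction}: iterating Lemma~\ref{lmm:refine_decrease} peels off one $A_t$ at each step, leaving $\AR \cap \Var(\RE{\mathbf{E}_0}) \setminus \AR = \emptyset$; the variable arguments cause no difficulty here precisely because they lie outside $\AR = \LV{\mathbf{E}_0}$. At each substitution I would check the side condition of Theorem~\ref{thrm:valuation_substitution}, namely $(\Var(\REF{\mathbf{E}_{t-1}(A_t)}) \setminus \Var(\mathbf{E}_{t-1}(A_t))) \cap \Var(\mathbf{E}_{t-1}) = \emptyset$, which holds because the only variable introduced by $\REF{\cdot}$ is fresh (Theorem~\ref{thrm:PNCMequation}, Corollary~\ref{coro:ref_variable}). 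Iterating Theorem~\ref{thrm:valuation_substitution} then yields the correspondence between solutions of $\mathbf{E}$ and of $\mathbf{E}_0$ (each solution of $\mathbf{E}$ restricts to one of $\mathbf{E}_0$, and each solution of $\mathbf{E}_0$ extends to one of $\mathbf{E}$ agreeing on $\Var(\mathbf{E}_0)$), and since $\LV{\mathbf{E}} \cap \Var(\RE{\mathbf{E}}) = \emptyset$, Theorem~\ref{thrm:valuation_general} supplies a bijection $\mathcal{A}$ from valuations of $\Var(\RE{\mathbf{E}})$ — now consisting precisely of the variable arguments $V_j$ together with the fresh refinement variables — onto the solutions of $\mathbf{E}$.

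The part needing genuinely new care, beyond copying the global argument, is verifying that $E_l$ is a \emph{complete} local allocator, since generality presupposes completeness. The condition on variable arguments is immediate, as $E_l(V_j) = V_j$ assigns distinct variables. For an actual argument $A_i$ I would argue equationally through the bijection: fix any valuation $v$ of $\Var(\RE{\mathbf{E}})$ and set $v' = \mathcal{A}(v)$, a solution of $\mathbf{E}$ and hence of $\mathbf{E}_0$. Then $\evalV{E_l(A_i)}{v} = \evalV{\RE{\mathbf{E}(A_i)}}{v} = v'(A_i)$, while the equation for $A_i$ in $\mathbf{E}_0$ gives $v'(A_i) = \evalV{\bigwedge_{(X,A_i) \in \ATT} \neg X}{v'}$; since $v'(A_k) = \evalV{E_l(A_k)}{v}$ for an actual attacker $A_k$ and $v'(V_j) = v(V_j) = \evalV{E_l(V_j)}{v}$ for a variable attacker $V_j$, this equals $\evalV{\bigwedge_{(X,A_i) \in \ATT} \neg E_l(X)}{v}$. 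As $v$ ranges over all valuations of the occurring variables, this establishes $E_l(A_i) \equiv \bigwedge_{(A',A_i) \in \ATT} \neg E_l(A')$, so $E_l$ is complete. For generality, given any constant local allocator $E'_l$ I would read it as a solution of $\mathbf{E}_0$, extend it to a solution of $\mathbf{E}$, write that solution as $\mathcal{A}(v)$, and observe $\evalV{E_l(A)}{v} = E'_l(A)$ for every argument. I expect the main obstacle to be the bookkeeping of which variables inhabit $\Var(\RE{\mathbf{E}})$ versus $\LV{\mathbf{E}}$ once both free variable arguments and fresh refinement variables are simultaneously in play, and ensuring the completeness equivalence is read over the correct variable set rather than only over the solutions delivered by $\mathcal{A}$.
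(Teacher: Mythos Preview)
Your proposal is correct and follows exactly the approach the paper intends: the paper's own proof consists of the single remark that the argument is similar to that of Theorem~\ref{thrm:general_construction}, with the only difference arising for variable arguments, which ``does not require any conditions.'' Your write-up is considerably more explicit than the paper---in particular, you spell out the local analogue of Lemma~\ref{lmm:solution_constant}, you track where the $V_j$ land in $\Var(\RE{\mathbf{E}})$, and you give a direct verification that $E_l$ is complete---but these are elaborations of the same strategy rather than a different route.
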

The proof is similar to that of Theorem~\ref{thrm:general_construction}.
The difference arises in the case of variable arguments, but it is easy to modify the proof because it does not require any conditions on variable arguments.

\subsection{Block Composition}
We obtain general local allocators for a block.
Our purpose is to compose some local allocators, but before discussing this, we define the composition of two blocks.

\begin{defn}
Let $B_1 = \langle \AR_1, \VArg_1, \ATT_1 \rangle$ and $B_2 = \langle \AR_2, \VArg_2, \ATT_2 \rangle$ be blocks satisfying $\AR_1 \cap \AR_2 = \emptyset$.
The block composed of these blocks is $\langle \AR, \VArg, \ATT \rangle$ where $\AR = \AR_1 \cup \AR_2$, $\VArg = (\VArg_1 \cup \VArg_2) \setminus \AR$ and $\ATT = \ATT_1 \cup \ATT_2$.
\end{defn}

Intuitively, the composition of two blocks instantiates variable arguments in one block to actual arguments in the other block.

Before discussing the composition of general local allocators, we present the properties of constant local allocators.
\begin{lmm} \label{lmm:compose_local_constant}
Assume that two blocks $B_1 = \langle \AR_1, \VArg_1, \ATT_1 \rangle$ and $B_2 = \langle \AR_2, \VArg_2, \ATT_2 \rangle$ satisfy $\AR_1 \cap \AR_2 = \emptyset$ and $B = \langle \AR, \VArg, \ATT \rangle$ is the composed block of $B_1$ and $B_2$.
Let $E_c$ be a constant local allocator for $B$; then $E_c$ is also a constant local allocator for $B_1$ and $B_2$.
\end{lmm}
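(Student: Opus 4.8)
The plan is to show that $E_c$, restricted to the arguments of $B_1$ (and symmetrically of $B_2$), satisfies both clauses in the definition of a constant local allocator. First I would check that $E_c$ is even defined on the domain $\AR_1 \cup \VArg_1$ of a $B_1$-allocator. Since $\AR_1 \subseteq \AR_1 \cup \AR_2 = \AR$, and every $v \in \VArg_1$ either lies in $\AR$ or in $\VArg_1 \setminus \AR \subseteq (\VArg_1 \cup \VArg_2) \setminus \AR = \VArg$, we get $\AR_1 \cup \VArg_1 \subseteq \AR \cup \VArg$. Hence $E_c$ assigns an expression to every argument of $B_1$, and because $E_c$ is a constant local allocator for $B$ each of these values is a logical constant; this immediately discharges the constancy clause for $B_1$.

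The substantive clause is the completeness equation for the actual arguments: for $A \in \AR_1$ I must show $E_c(A) \equiv \bigwedge_{(A', A) \in \ATT_1} \neg E_c(A')$. The key step is the attacker-set identity $\{A' \mid (A', A) \in \ATT\} = \{A' \mid (A', A) \in \ATT_1\}$ for $A \in \AR_1$. Since $\ATT = \ATT_1 \cup \ATT_2$, the inclusion $\supseteq$ is immediate; for $\subseteq$, suppose $(A', A) \in \ATT_2$. By the typing of a block we have $\ATT_2 \subseteq (\AR_2 \cup \VArg_2) \times \AR_2$, so $A \in \AR_2$, contradicting $A \in \AR_1$ together with $\AR_1 \cap \AR_2 = \emptyset$. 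Thus every incoming attack on $A$ already lies in $\ATT_1$, the two conjunctions are literally the same expression, and since $A \in \AR$ guarantees that $E_c$ obeys the completeness equation in $B$, it obeys the corresponding equation in $B_1$ verbatim.

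It then remains only to confirm that the attacker values appearing in the conjunction are the ones prescribed by $B_1$. Each $A'$ with $(A', A) \in \ATT_1$ lies in $\AR_1 \cup \VArg_1$, hence in the domain of the restricted allocator, and $E_c(A')$ is the same constant whether $A'$ is viewed inside $B$ or inside $B_1$. The only subtlety is that a variable argument of $B_1$ may have been promoted to an actual argument of $B$ by the composition; but promotion does not change its assigned value, and for the $B_1$-definition such an argument needs only to be constant, not to satisfy the completeness equation. The reasoning for $B_2$ is identical after exchanging the two indices.

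The main obstacle — more a bookkeeping point than a genuine difficulty — is exactly the attacker-set identity for $A \in \AR_1$: everything hinges on the observation that attacks in $\ATT_2$ can only target $\AR_2$, so the disjointness $\AR_1 \cap \AR_2 = \emptyset$ ensures that composing with $B_2$ adds no new incoming edges to the actual arguments of $B_1$. Once this is established the equivalence transfers unchanged, and no fresh computation in three-valued logic is required.
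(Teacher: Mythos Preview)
Your proof is correct and follows essentially the same approach as the paper's own proof: the paper simply asserts the attacker-set identity $(A',A)\in\ATT \Leftrightarrow (A',A)\in\ATT_1$ for $A\in\AR_1$ ``from the definition of the composition'' and concludes immediately, whereas you spell out why this identity holds (via $\ATT_2\subseteq(\AR_2\cup\VArg_2)\times\AR_2$ and disjointness) and add the domain and constancy bookkeeping that the paper leaves implicit.
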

\begin{proof}
From the definition of the composition of the blocks, for argument $A \in \AR_1$, $(A', A) \in \ATT \Leftrightarrow (A', A) \in \ATT_1$ for any $A'$.
This means that $E_c(A) \equiv \bigwedge_{(A', A) \in \ATT} \neg E_c(A') = \bigwedge_{(A', A) \in \ATT_1} \neg E_c(A')$, and therefore, $E_c$ is a constant local allocator for $B_1$.
It is similar for $B_2$.
\end{proof}

When composing two complete allocators for two blocks, two expressions are allocated into an instantiated argument; one is a variable (as a variable argument) and the other is an expression (as an actual argument).
The composition processes of each instantiated argument proceeds by making these expressions equivalent.
\begin{thrm} \label{thrm:compose_local_general}
Assume that two blocks $B_1 = \langle \AR_1, \VArg_1, \ATT_1 \rangle$ and $B_2 = \langle \AR_2, \VArg_2, \ATT_2 \rangle$ satisfy $\AR_1 \cap \AR_2 = \emptyset$ and $B = \langle \AR, \VArg, \ATT \rangle$ is the composition of $B_1$ and $B_2$.
Let $E_1$ be a general local allocator for $B_1$ and $E_2$ be a general local allocator for $B_2$, where $E_1(V) = E_2(V)$ for any $V \in \VArg_1 \cap \VArg_2$.
When $\mathbf{V} = \VArg_1 \cap \AR_2 \cup \VArg_2 \cap \AR_1 = \{ V_1, V_2, \ldots, V_n \}$ and
$\mathbf{E} = (\{ V \equiv E_2(V) | V \in \VArg_1 \cap \AR_2 \} \cup \{ V \equiv E_1(V) | V \in \VArg_2 \cap \AR_1 \}) [V_1] [V_2] \ldots [V_n]$,
$E$ defined as 
$$
E(X) = \left\{
\begin{array}{l@{\qquad}l}
E_1(X) [\mathbf{E}] & (X \in \AR_1 \cup \VArg_1 \land X \not\in \AR_2) \\
E_2(X) [\mathbf{E}] & (X \in \AR_2 \cup \VArg_2 \land X \not\in \AR_1) \\
\end{array}
\right .$$ 
is a general local allocator for $B$, where $p[\mathbf{E}] = p[p_1/x_1] \ldots [p_n/x_n]$ for $\mathbf{E} = \{ x_i \equiv p_i | 1 \leq i \leq n \}$.
\end{thrm}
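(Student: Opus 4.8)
The plan is to establish the two defining properties of a general local allocator separately: that $E$ is a complete local allocator for $B$, and that every constant local allocator for $B$ is an instantiation of $E$. Throughout I would follow the pattern of the proof of Theorem~\ref{thrm:general_construction}, viewing the gluing set $\mathbf{E}$ as the fully substituted form of the interface equations $\mathbf{E}_0 = \{V \equiv E_2(V) \mid V \in \VArg_1 \cap \AR_2\} \cup \{V \equiv E_1(V) \mid V \in \VArg_2 \cap \AR_1\}$. First I would record, exactly as in Theorem~\ref{thrm:general_construction} via repeated use of Lemma~\ref{lmm:refine_decrease}, that $\LV{\mathbf{E}} \cap \Var(\RE{\mathbf{E}}) = \emptyset$; in particular no interface variable $V_i$ occurs in any right-hand side of the solved set, so that $X[\mathbf{E}]$ is well behaved (the substitutions $[p_i/V_i]$ do not interfere) and Theorem~\ref{thrm:valuation_general} applies to $\mathbf{E}$.

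For completeness I would first check the variable-argument condition: every $V \in \VArg = (\VArg_1 \cup \VArg_2)\setminus \AR$ is a variable argument of one block with $E_1(V) = V$ (or $E_2(V) = V$), and since $V \notin \AR$ means $V \notin \mathbf{V}$, the substitution $[\mathbf{E}]$ leaves it untouched, so $E(V) = V$; distinctness across blocks follows because the construction of Theorem~\ref{thrm:local_general} names each variable argument by the argument itself and the hypothesis $E_1(V) = E_2(V)$ handles the shared ones. For an actual argument $A \in \AR$, say $A \in \AR_1$, its attackers in $B$ coincide with its attackers in $B_1$, because $\ATT_2 \subseteq (\AR_2 \cup \VArg_2) \times \AR_2$ can only target $\AR_2$ and $A \notin \AR_2$. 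Applying $[\mathbf{E}]$ to the completeness equivalence of $E_1$ and pushing it through the connectives with Theorem~\ref{thrm:subequiv} reduces the goal to showing $E_1(A')[\mathbf{E}] \equiv E(A')$ for each attacker $A'$.

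This last equivalence is the crux and the place where I expect the real work. When $A' \notin \AR_2$ it is immediate from the definition of $E$. The hard case is an interface attacker $A' \in \VArg_1 \cap \AR_2$, where $E(A') = E_2(A')[\mathbf{E}]$ while $E_1(A')[\mathbf{E}] = A'[\mathbf{E}] = \RE{\mathbf{E}}(A')$, so I must prove $\RE{\mathbf{E}}(A') \equiv E_2(A')[\mathbf{E}]$. I would argue at the level of valuations: for arbitrary $w$, Lemma~\ref{lmm:substitution} gives $\evalV{p[\mathbf{E}]}{w} = \evalV{p}{w'}$, where $w'$ agrees with $w$ off the interface and sets $w'(V_i) = \evalV{\RE{\mathbf{E}}(V_i)}{w}$; because no interface variable occurs in $\RE{\mathbf{E}}$, this $w'$ is itself a solution of $\mathbf{E}$, hence by the first formula of Theorem~\ref{thrm:valuation_substitution} a solution of $\mathbf{E}_0$, so $w'(A') = \evalV{E_2(A')}{w'}$. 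Reading this back through Lemma~\ref{lmm:substitution} yields $\evalV{A'[\mathbf{E}]}{w} = \evalV{E_2(A')[\mathbf{E}]}{w}$ for all $w$, i.e.\ the desired equivalence (the symmetric case $A' \in \VArg_2 \cap \AR_1$ uses the other half of $\mathbf{E}_0$).

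For generality I would start from an arbitrary constant local allocator $E_c$ for $B$, restrict it with Lemma~\ref{lmm:compose_local_constant} to constant local allocators $E_{c,1}, E_{c,2}$ for $B_1, B_2$, and use generality of $E_1, E_2$ to obtain valuations $v_1, v_2$ with $E_{c,i} = \evalV{E_i(\cdot)}{v_i}$. Since the internal refinement variables of the two blocks may be taken disjoint and $v_1, v_2$ agree on the shared variable arguments $\VArg_1 \cap \VArg_2$ (forced by $E_1(V) = E_2(V)$ together with $E_c(V) = E_{c,1}(V) = E_{c,2}(V)$), their union $v^*$ is well defined, and the interface identities $v^*(A') = E_c(A') = \evalV{E_2(A')}{v^*}$ (and symmetrically) show $v^*$ solves $\mathbf{E}_0$. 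Feeding $v^*$ through the second formula of Theorem~\ref{thrm:valuation_substitution} and then Theorem~\ref{thrm:valuation_general} produces the valuation $v$ on $\Var(\RE{\mathbf{E}})$ whose associated solution restores the interface variables to the values forced by $v^*$; evaluating $E(X) = E_i(X)[\mathbf{E}]$ at $v$ and unwinding the substitution with Lemma~\ref{lmm:substitution} gives $\evalV{E(X)}{v} = \evalV{E_i(X)}{v_i} = E_{c,i}(X) = E_c(X)$. Because completeness already guarantees (by the instantiation lemma for complete local allocators) that every instantiation of $E$ is a constant local allocator, the two parts together establish that $E$ is a general local allocator for $B$; the main obstacle remains the interface-equivalence bookkeeping, since that is where the disjointness $\LV{\mathbf{E}} \cap \Var(\RE{\mathbf{E}}) = \emptyset$ and the solution-preservation of equation substitution must be combined carefully.
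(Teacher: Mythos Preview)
Your proposal is correct and follows the same two-part structure (completeness, then generality via Lemma~\ref{lmm:compose_local_constant} and the refinement machinery) as the paper's own proof. In fact you are more careful: the paper simply writes $\bigwedge_{(A',A)\in\ATT_1}\neg E_1(A')[\mathbf{E}] = \bigwedge_{(A',A)\in\ATT}\neg E(A')$ without justifying the interface case $A'\in\VArg_1\cap\AR_2$, and it omits the variable-argument clause of completeness entirely; your valuation argument via Lemma~\ref{lmm:substitution} and the first formula of Theorem~\ref{thrm:valuation_substitution} is exactly what is needed to close that gap.
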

\begin{proof}
First, we show that $E$ is complete.
From the definition of the composition of the blocks, for argument $A \in \AR_1$, $(A', A) \in \ATT \Leftrightarrow (A', A) \in \ATT_1$ for any $A'$.
This means that $E(A) = E_1(A)[\mathbf{E}] \equiv \bigwedge_{(A', A) \in \ATT_1} \neg E_1(A') [\mathbf{E}] = \bigwedge_{(A', A) \in \ATT} \neg E(A')$.
In a similar fashion, for argument $A \in \AR_2$, we can see that $E(A) = E_2(A)[\mathbf{E}] = \bigwedge_{(A', A) \in \ATT_2} \neg E_2(A') [\mathbf{E}] = \bigwedge_{(A', A) \in \ATT} \neg E(A')$.
Therefore, $E$ is complete.

Next, we show that $E$ is general.
Let $E_c$ be a constant local allocator for $B$.
From Lemma~\ref{lmm:compose_local_constant}, $E_c$ is a constant local allocator for both $B_1$ and $B_2$.
Because $E_1$ and $E_2$ are general local allocators for $B_1$ and $B_2$, respectively, there are valuations $v_1$ and $v_2$ where $\evalV{E_1(X)}{v_1} = E_c(X)$ for any $X \in \AR_1 \cup \VArg_1$ and $\evalV{E_2(X)}{v_2} = E_c(X)$ for any $X \in \AR_2 \cup \VArg_2$.
Here, we define $v$ as $v(X) = v_1(X)$ for any $X$ in the allocation variables of $E_1$ and $v(X) = v_2(X)$ for any $X$ in the allocation variables of $E_2$.
From the definition of $\mathbf{E}$, $\LV{\mathbf{E}} \cap \RE{\mathbf{E}} = \emptyset$, and there exists a valuation $v'$ such that $v'(X) = v(X)$ for any $X$, except for the extra variables introduced by solving the equation.
This means that $v'$ is a valuation satisfying $E_c(X) = \evalV{E}{v'}$, and therefore, $E$ is general.
\end{proof}

As we can see, the proof of the completeness of $E$ in the above theorem only depends on the completeness of $E_1$ and $E_2$.
This means that if $E_1$ and $E_2$ are complete, then $E$ is also complete.

\begin{coro} \label{coro:global_from_local}
For any splitter $\mathbf{B}$ of the framework $\langle \AR, \ATT \rangle$,
the allocator obtained by composition of general local allocators $E_i$ (for each block $B_i$ in $\mathbf{B}$) is a general allocator for $\langle \AR, \ATT \rangle$.
\end{coro}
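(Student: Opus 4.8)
The plan is to prove the corollary by induction on the number of blocks in the splitter $\mathbf{B}$, with Theorem~\ref{thrm:compose_local_general} supplying the inductive step. The guiding observation is that composing all of the blocks in $\mathbf{B}$ reconstructs the framework itself: since $\ATT \subseteq \AR \times \AR$, any variable argument $V \in \VArg_i$ that occurs as an attacker in $\ATT_i \subseteq \ATT$ lies in $\AR = \bigcup_j \AR_j$, and because the $\AR_j$ are pairwise disjoint and disjoint from $\VArg_i$, such a $V$ is an actual argument of exactly one other block. Hence once every block has been composed, each variable argument has been instantiated to an actual argument, so the fully composed block is $\langle \AR, \emptyset, \ATT \rangle$, i.e., the framework. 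As noted after the definition of general local allocators, a general local allocator for a block with empty $\VArg$ is exactly a general (global) allocator, so it suffices to show that the iterated composition yields a general local allocator for the fully composed block.

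For the base case, a single block covering the framework has $\AR_1 = \AR$, $\ATT_1 = \ATT$, and $\VArg_1 = \emptyset$ by the observation above, so its general local allocator is already a general allocator. For the inductive step, I would pick two blocks $B_1, B_2 \in \mathbf{B}$, form their composition $B_{12}$, and take the composed allocator $E_{12}$ given by Theorem~\ref{thrm:compose_local_general}. I then verify that replacing $B_1$ and $B_2$ by $B_{12}$ leaves a splitter with one fewer block: the actual-argument sets remain pairwise disjoint and still union to $\AR$, the attack sets still union to $\ATT$, and $B_{12}$ is a legal block since every attacker in $\ATT_1 \cup \ATT_2$ is either an actual argument of $B_{12}$ or lies in $\VArg_{12} = (\VArg_1 \cup \VArg_2) \setminus (\AR_1 \cup \AR_2)$. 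By Theorem~\ref{thrm:compose_local_general}, $E_{12}$ is a general local allocator for $B_{12}$, so the induction hypothesis applies to the reduced splitter and completes the argument.

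The main obstacle is ensuring that the hypotheses of Theorem~\ref{thrm:compose_local_general} hold at every composition step, in particular the compatibility requirement $E_1(V) = E_2(V)$ on shared variable arguments $V \in \VArg_1 \cap \VArg_2$, together with the implicit need for the internal (fresh) variables of the two local allocators to be disjoint. Both are secured by using the canonical general local allocators of Theorem~\ref{thrm:local_general}, which send every variable argument to the variable named after itself, $E_i(V) = V$; shared variable arguments then automatically receive identical expressions, and the fresh variables produced by the refinement steps can be renamed apart across blocks. I would also check that this canonical form survives composition: in the formula of Theorem~\ref{thrm:compose_local_general}, only the instantiated variables $\mathbf{V} = (\VArg_1 \cap \AR_2) \cup (\VArg_2 \cap \AR_1)$ are substituted away, so any remaining $V \in \VArg_{12}$ still satisfies $E_{12}(V) = V$ and the invariant is preserved. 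Once this bookkeeping is discharged, the repeated application of Theorem~\ref{thrm:compose_local_general} is routine, and the final variable-argument-free block delivers the general global allocator.
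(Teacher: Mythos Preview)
Your proposal is correct and matches the paper's intended route: the corollary is stated without proof immediately after Theorem~\ref{thrm:compose_local_general}, and the implicit argument is precisely the iterated application of that theorem across the blocks of the splitter that you spell out. Your additional bookkeeping---checking that the composed block is still part of a splitter, that the canonical choice $E_i(V)=V$ from Theorem~\ref{thrm:local_general} guarantees the compatibility hypothesis at every step, and that this invariant survives composition because only variables in $\mathbf{V}$ are substituted---fills in exactly the details the paper leaves tacit.
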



From Theorem~\ref{thrm:local_general}, the general local allocators can be computed for each block in parallel.
Two independent compositions can be computed in parallel too, by Theorem~\ref{thrm:compose_local_general}.
Or, we can easily generalize the theorem, composing many general local allocators at once.
The local allocation method can be used to represent the semantics of the blocks and is also flexible about the computations of the framework.

Another application of the local allocation method is to present explicitly the effects between specific two arguments.
A global allocator denotes the relationship between an argument and the framework as its allocated expression.
We can compare two arguments via the relationship, i.e., expressions allocated to the arguments, however, it is difficult to clarify the effect from one to another.
In contrast, a general local allocator for the block with two variable arguments gives more explicit relationship between these arguments (corresponding to the variable arguments).
Here, we think about the splitter of the framework consisting two blocks, one is for only two arguments, namely $a$ and $b$, and the other is for the other arguments.
The latter block treats $a$ and $b$ as variable arguments, and its general local allocator $E$ uses the variables $a$ and $b$ as allocation variables.
The former block describes how $a$ and $b$ are attacked by the other arguments.
The conditions of completeness for $a$ and $b$ can be shown as $a \equiv \bigwedge_{(A, a) \in \ATT} \neg E(A)$ and $b \equiv \bigwedge_{(B, b) \in \ATT} \neg E(B)$.
In the refined equation of the former equation, $a$ occurs only left-hand side and $b$ occurs only right-hand side, and hence, it shows the effect from $b$ to $a$.
Similarly, the refined equation of the latter equation shows the effect from $a$ to $b$.
Such observation is possible by local allocation method.

\section{Discussion} \label{sec:discussion}
In this section, we show two aspects of general allocators; one is an application to stability, and the other is a new concept for AFs, termed arity.

\subsection{Stability} \label{sec:stability}
In Dung's AF, stable labelings are defined as complete labelings without any $\Lundec$ labelings.
They correspond to constant allocators without $U$.
A general allocator gives a constant allocator without $U$ when any expressions allocated by the general allocator are evaluated to $T$ or $F$ by valuation $v$.
From this observation, the conditions of the valuations are obtained as satisfiability problems.
We define a function $\mathcal{S}$ from constants in binary logic and three-valued logical expressions to binary logical expressions.
$$\begin{array}{lcl@{\qquad}lcl}
\mathcal{S}_T(T)            & = & T                                           & \mathcal{S}_F(T)            & = & F \\
\mathcal{S}_T(F)            & = & F                                           & \mathcal{S}_F(F)            & = & T \\
\mathcal{S}_T(U)            & = & F                                           & \mathcal{S}_F(U)            & = & F \\
\mathcal{S}_T(x)            & = & x                                           & \mathcal{S}_F(x)            & = & \neg x \\
\mathcal{S}_T(p_1 p_2)      & = & \mathcal{S}_T(p_1) \land \mathcal{S}_T(p_2) & \mathcal{S}_F(p_1 p_2)      & = & \mathcal{S}_F(p_1) \lor \mathcal{S}_F(p_2) \\
\mathcal{S}_T(p_1 \lor p_2) & = & \mathcal{S}_T(p_1) \lor \mathcal{S}_T(p_2)  & \mathcal{S}_F(p_1 \lor p_2) & = & \mathcal{S}_F(p_1) \land \mathcal{S}_F(p_2) \\
\mathcal{S}_T(\neg p)       & = & \mathcal{S}_F(p)                            & \mathcal{S}_F(\neg p)       & = & \mathcal{S}_T(p) \\
\end{array}$$
The functions $\mathcal{S}_T$ and $\mathcal{S}_F$ produce conditions to evaluate such an expression as $T$ and $F$, respectively.

\begin{lmm}
For expression $p$ and logical constant $C \in \{ T, F \}$, $\mathcal{S}_C(p)$ is satisfiable with a valuation $v$ iff $\evalV{p}{v} = C$.
\end{lmm}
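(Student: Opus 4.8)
The plan is to prove both directions by a single structural induction on $p$, carried out \emph{simultaneously} for $C = T$ and $C = F$. A simultaneous induction is essentially forced by the definition of $\mathcal{S}$: the negation clauses swap the two functions, $\mathcal{S}_T(\neg p) = \mathcal{S}_F(p)$ and $\mathcal{S}_F(\neg p) = \mathcal{S}_T(p)$, and the conjunction/disjunction clauses for $\mathcal{S}_F$ dualize (via De Morgan) those for $\mathcal{S}_T$, so neither half of the statement can be proved without the inductive hypothesis for the other. Throughout I read ``$\mathcal{S}_C(p)$ is satisfiable with $v$'' as ``$v$ makes the expression $\mathcal{S}_C(p)$ true'', evaluating $p$ under $v$ by Kleene's tables from Section~\ref{sec:three-valued}; the relevant setting for stability is that each allocation variable is assigned $T$ or $F$, but the argument uses only the ``evaluates to $T$'' facts and so is insensitive to whether $v$ ever takes $U$ on variables.

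For the base cases I check the atoms directly. When $p = T$ we have $\mathcal{S}_T(T) = T$, satisfied by every $v$, matching $\evalV{T}{v} = T$, while $\mathcal{S}_F(T) = F$ is satisfied by no $v$, matching $\evalV{T}{v} \neq F$; the case $p = F$ is symmetric. The constant $U$ is the characteristic case: $\mathcal{S}_T(U) = \mathcal{S}_F(U) = F$ is unsatisfiable, exactly reflecting that $\evalV{U}{v} = U$ is never $T$ and never $F$. For a variable $x$, $\mathcal{S}_T(x) = x$ is true under $v$ iff $v(x) = T$, and $\mathcal{S}_F(x) = \neg x$ is true iff $v(x) = F$, as required.

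The inductive step rests on a few equivalences read straight off the Kleene truth tables: $\evalV{p_1 \land p_2}{v} = T$ iff both conjuncts evaluate to $T$, whereas $\evalV{p_1 \land p_2}{v} = F$ iff at least one conjunct evaluates to $F$ (in all remaining cases the conjunction is $U$); dually for $\lor$; and $\evalV{\neg p}{v} = T$ iff $\evalV{p}{v} = F$, with $\evalV{\neg p}{v} = F$ iff $\evalV{p}{v} = T$. For example, in the case $p = p_1 p_2$ with $C = T$, since $\mathcal{S}_T(p_1 p_2) = \mathcal{S}_T(p_1) \land \mathcal{S}_T(p_2)$, the induction hypothesis turns satisfaction of this formula into ``$\evalV{p_1}{v} = T$ and $\evalV{p_2}{v} = T$'', which by the first table fact is exactly $\evalV{p_1 p_2}{v} = T$. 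For $C = F$ one instead uses $\mathcal{S}_F(p_1 p_2) = \mathcal{S}_F(p_1) \lor \mathcal{S}_F(p_2)$ together with the ``$F$ iff some conjunct is $F$'' fact; the $\lor$ cases are the mirror image, and each $\neg$ case simply invokes the induction hypothesis for the opposite constant.

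The only real care needed is bookkeeping rather than computation: I must verify that every De~Morgan-style clause of $\mathcal{S}$ is paired with the correct truth-table fact for the target value $C$ (e.g.\ that $\mathcal{S}_F$ sends $\land$ to $\lor$ precisely because a Kleene conjunction is $F$ exactly when \emph{some} argument is $F$), and keep the two halves of the induction synchronized so the negation clauses can legitimately appeal to the other half. No individual step is heavy; the entire content is lining up the recursive clauses of $\mathcal{S}_T$ and $\mathcal{S}_F$ with the three-valued semantics.
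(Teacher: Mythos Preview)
Your proof is correct, and the simultaneous structural induction on $p$ for both $C=T$ and $C=F$ is exactly the natural argument; the paper itself states this lemma without proof, so there is no alternative route to compare against, and your write-up supplies precisely the routine verification the authors omitted.
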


From this lemma, we obtain a stable labeling by solving the satisfiability problem of $(\mathcal{S}_T(E(A_1)) \lor \mathcal{S}_F(E(A_1))) \land \ldots \land (\mathcal{S}_T(E(A_n)) \lor \mathcal{S}_F(E(A_n)))$ for $\AR = \{ A_1, \ldots, A_n \}$ and a general allocator $E$.

In this paper, we do not discuss the definition of stability of the allocation method.
We may define a stable allocator as a complete allocator without any occurrences of logical constants $U$, i.e., describable in binary logic, because we can obtain some stable labelings from such an allocator and valuations without $U$.
Fortunately, there are ``general'' stable allocators, with which any stable labelings can be instantiated, because any two constant allocators can be composed without $U$ by Theorem~\ref{thm:generalize}.
However, this definition is not applicable to extensions of AFs, such as abstract dialectical frameworks~\cite{Brewka17}.
A more general definition of stability in allocation methods will be investigated in future work.

\subsection{Arity}
We call the number of allocation variables for the general allocator \emph{the arity of the allocator}.
The arity of the general allocator built by the equation solving method described in Section~\ref{sec:equation} is the size of $\AR$.
When the expression $G$ is equivalent to $N \neg X \lor M$, where $N$ and $M$ satisfies the conditions in Lemma~\ref{lmm:PNCMform} (i.e., $P \equiv C \equiv F$), $\RE{\REF{X \equiv G}} \equiv U N \lor M$.
This means that some allocation variables introduced by the equation-solving method can be removed from the general allocator.

For any framework, the size of the allocation variables is less than the size of $\AR$ because the first attempt to substitute the equations satisfies the assumption described above\footnote{The equation is $A \equiv \bigwedge \neg A_i \land A$ or $A \equiv \bigwedge \neg A_i$ in the first attempt.}.
Considering the arity of a general allocator, the minimum arity might be one of the most important characteristics of a framework.
However, in our method, the processing order of equations affects the number of allocation variables.

\begin{example}
For the framework $\langle \{ 1, 2, 3 \}, \{ (1, 2), (2, 1), (1, 3), (3, 1) \} \rangle$ (Figure~\ref{fig:order_dependent}), if we solve the equations in the order $1$, $2$, and $3$,
 then the general allocator $E_1$ is $E_1(1) = \neg x_2 \lor x_3$, and $E_1(2) = E_1(3) = x_2 \neg x_3$.
However, if we solve the equations in the order $2$, $3$, and $1$, then the general allocator $E_2$ is $E_2(1) = x_1$, and $E_2(2) = E_2(3) = \neg x_1$.
\end{example}

\begin{figure}[tbp]
 \begin{center}
 \includegraphics[height=0.7cm]{./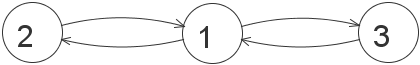}
 \end{center}
 \caption{A framework whose general allocators obtained by processing different orders of arguments have different arities.} \label{fig:order_dependent}
\end{figure}

An allocation variable is introduced when the equation has positive (or complex) self-dependencies.
Substitution of the refined equation moves the dependency of the argument (and its dependent arguments) to its attacking arguments. 
This means that self-dependency of an argument only occurs when the argument and some of the processed arguments make a loop.
At the same time, if an equation including self-dependencies only has negative occurrences, a variable is not introduced.
From this point of view, the minimum arity is, at most, the minimum size of the feedback vertex sets for the graph of the framework, and computing this is an NP-complete problem.


\section{Related work} \label{sec:relatedwork}

In the allocation method, we focused completeness (and stability).
So, we mainly compare with existing works for completeness.
In the future work, we will discuss other types of semantics proposed for argumentation frameworks.

\subsection{Extensions of Frameworks}
We have discussed allocation method for AFs, but the method is also applicable to extensions of AFs.
The following extensions allow for the flexibility of completeness, which controls each row of Table~\ref{tbl:example1}. 
The allocation method denotes the columns of the table and is orthogonal to these extensions.

Nielsen extends Dung's AF with set-attacks (SETAFs)~\cite{Nielsen07}.
In the extended framework, attack relations are defined as a subset of $2^{\AR} \times \AR$ instead of $\AR \times \AR$.
Basic notions of the semantics, such as completeness and stability, are shared with the original AFs.

Cayrol proposed bipolar argumentation frameworks (BAFs)~\cite{Cayrol13}.
BAFs involve not only attacks, but also supports.
Acceptabilities in BAFs reflect both adjacent arguments and indirectly supporting/attacking arguments.
This makes it hard to apply the allocation method to BAFs.
Kawasaki proposed another type of BAFs~\cite{Kawasaki18}.
Kawasaki's BAF also consists of supports and attacks, but the acceptabilities of the arguments are defined by adjacent arguments.

Abstract dialectical frameworks~\cite{Brewka17} (ADFs) are more expressive than AFs.
An ADF consists of a set of statements (corresponding to arguments in an AF), links between two statements, and acceptance conditions.
Acceptance conditions can be described as binary logical expressions with variables denoting the acceptabilities of the linked statements.
For statement $a$, the acceptance condition is the form $a \equiv \mathcal{F}(\mathit{par}(a))$ (here $\mathcal{F}$ is a function to a binary logical expression, and $\mathit{par}(a)$ is a set of variables denoting the acceptabilities of the linked statements).
This means that an acceptance condition in an ADF is also understood as a variable equation in Section~\ref{sec:equation}, and a set of conditions is a variable equation set.
Thus, it is possible to solve the variable equation set by the same algorithm as used for the AF, i.e., to construct a general allocator for an ADF.
The expressive power of ADFs enables us to describe SETAFs and Kawasaki's BAFs as ADFs; therefore, the allocation method covers these frameworks.

Our algorithm allows for a wider range of conditions than acceptance conditions in ADFs.
For example, if statement $a$ is attacked by $b$ and supported by $c$, its acceptance condition is denoted $a \equiv \neg b c$ or $a \equiv \neg b \lor c$.
However, we may want $a$ to be $\Lundec$ when both $b$ and $c$ are acceptable ($\Lin$).
In this case, we can describe the acceptance condition as $a \equiv \neg b c \lor U b c$ or $a \equiv (\neg b \lor c) (U \lor \neg b \lor \neg c)$.
Because the condition requires the logical constant $U$, it is impossible to represent this in an ADF, but easy to extend the definition of acceptance conditions.

Gabbay proposed semantics based on numerical equations~\cite{Gabbay12}.
In the equational approach, each argument has its own numerical function, describing relationships with other arguments.
The solutions of the equations are numeric values of $[0, 1]$ and interpretable as acceptance like labels ($0$ is `$\Lin$', and $1$ is `$\Lout$').
Such numerical equations are solved by general approaches with mathematical solvers such as Maple, MATLAB and NSolve.
However, the solutions of the equations are too detailed in terms of solutions (some solutions are denoted with square roots, which should correspond to `$\Lundec$'), and sometimes generate solutions outside of the range $[0, 1]$.
Moreover, the equational approach does not provide \emph{general} solutions, which we have concerned in this paper.

\subsection{Splitting Frameworks}

Splitting an AF into blocks and computing their semantics can be achieved by several methods.
Baroni et al. discussed the decomposability of a framework in several semantics~\cite{Baroni14}.
We do not discuss the types of splitters, but their results meet Corollary~\ref{coro:global_from_local}, i.e. any splitters are allowed in an allocation method based on complete semantics.
Variable arguments are similar to the input arguments in the $I/O$-gadgets, proposed in \cite{Giacomin16}.
Compared to $I/O$-gadgets, local allocation requires that we indicate all of the inputs (there are effects from outside of the block), but does not require any information about the outputs (effects on the outside).

Local allocation is also applicable to all of the extensions introduced in the previous section, in the same way as in the case of the AF.
Linsbichler proposed a splitting method for ADFs~\cite{Linsbichler14}.
This is a similar approach to that for AFs proposed by Baumann et al.~\cite{Baumann12}.
These approaches add some arguments like variable arguments in the local allocation, but they also add attacks (links in ADF).
Local allocation splits the framework with respect to the original topology, and it is therefore straightforward to split an ADF into two or more blocks.
On the other hand, allocation methods focus on completeness, but other semantics, such as preferred semantics or admissibility, are not discussed.

\subsection{Formula-based approaches}

There are many approaches to calculating several types of semantics using logical formulas.
These are based on extensions/labeling methods.

Besnard et al. proposed a methodology for encoding argumentation frameworks and set operations to logical formulas~\cite{Besnard14}.
Arieli and Caminada proposed an approach based on signed theories and quantified Boolean formulas to calculate several types of labelings~\cite{Arieli13}.
These approaches define a formula for each argumentation framework to represent the acceptance of arguments.
Our approach allocates a formula for each argument in a given framework to represent how the acceptability of the argument behaves in the framework.

The approach to the stability described in Section~\ref{sec:stability} is related to Besnard's approach.
Our approach uses two stages to generate formulas: calculating general allocators and generating formulas.
As general allocators are complete, the generated formulas do not explicitly include completeness conditions.
This means that the formulas generated by our approach may be smaller than those generated by Besnard's approach.
We should compare the sizes of the formulas generated by these approaches.

\section{Conclusion} \label{sec:conclusion}
In this paper, we proposed a new algorithm for constructing a general allocator.
In the algorithm, we solved the equations for three-valued logical expressions.
We demonstrated that the general solutions obtained by the algorithm are general allocators.
We also proposed an extension of the algorithm for constructing a general local allocator.
These algorithms are applicable to extensions of AFs such as SETAFs, BAFs and ADFs.

We will discuss the following three key areas in future work.
\begin{itemize}
 \item The complexity of the equation-solving algorithm:
 As the equation-solving algorithm proposed in this paper repeats the transformation of expressions, its time complexity depends on the sizes of the expressions.
 The sizes of the expressions increase mainly due to $\mathcal{R}$ in Definition~\ref{defn:transform_function} and substitutions of equations.
 We can decrease the sizes of the expressions by optimization, but some optimizations require specific patterns of expressions or canonicalization.
 
 \item Covering the dynamics of the framework:
 Addition/removal of arguments and/or attacks are often discussed in the context of belief revision.
 Because solving equations strongly relies on the topology of an AF, it is difficult to apply the algorithm to dynamics on an AF such as the addition/removal of arguments and/or attacks.
 Compositions of blocks are used to tackle such dynamics, for example, the division method proposed by Liao et al.~\cite{Liao11} and Booth et al.~\cite{Booth13}.
 A block introduced in Section~\ref{sec:localallocation} requires all attackers for its actual arguments, and therefore, it is still difficult to apply the addition of attacks from the outside of the block.
 We will develop more flexible structures for allocation method to apply dynamics.

 \item Discussing other types of semantics:
 We discussed some aspects of stability, but they are not applicable to ADFs.
 We would like to develop another definition of stability.
 To discuss wide types of semantics, we should develop admissibility in the allocation method.
 In the ADF, admissibility is defined using the partial order relation, which satisfies $U \leq T$ and $U \leq F$.
 Such relations are also definable in an allocation method similar to the equivalence relation $\equiv$, but it is necessary to discuss the existence of ``generally'' admissible allocators.
\end{itemize}




\end{document}